\documentclass[english]{amsart}
\usepackage{xr}
\usepackage{hyperref, caption}
\usepackage[english]{babel}
\usepackage{bbm}
\usepackage{hyperref}
\usepackage{verbatim}
\usepackage{ifthen}
\usepackage{amsfonts,amssymb,mathtools,amsthm,amsmath}
\usepackage{graphicx,xcolor,subcaption}
\usepackage{bbm}

\usepackage{adjustbox}

\usepackage{array, booktabs, multirow, hhline}
\renewcommand{\arraystretch}{1.5}

\makeatletter
\newcommand*{\centerfloat}{%
  \parindent \z@
  \leftskip \z@ \@plus 1fil \@minus \textwidth
  \rightskip\leftskip
  \parfillskip \z@skip}
\makeatother

\everymath{\displaystyle}

\usepackage[style=numeric,maxcitenames=2,maxbibnames=10,doi=false,isbn=false,url=false,natbib=true,giveninits=true,hyperref,bibencoding=utf8,backend=biber]{biblatex}
\addbibresource{main.bib}
\AtEveryBibitem{%
  \clearfield{note}%
  \clearfield{issn}%
  \clearlist{language}%
  }

\newcommand{\R}{\ensuremath{\mathbb{R}}}

\newcommand{\E}{\ensuremath{\mathbb{E}}}
\renewcommand{\P}{\ensuremath{\mathbb{P}}}

\newcommand{\ind}[1]{\ensuremath{\mathbbm{1}_{\left\{#1\right\}}}}
\newcommand{\diff}{\mathop{}\mathopen{}\mathrm{d}}
\newcommand{\cal}[1]{\ensuremath{\mathcal{#1}}}

\newcommand\steq[1]{\stackrel{\text{\rm #1}.}{=}}

\def\eps{\varepsilon}

\setcounter{tocdepth}{1}

\newtheorem{proposition}{Proposition}
\newtheorem{definition}{Definition}
\newtheorem{lemma}{Lemma}
\newtheorem{theorem}{Theorem}

\title[Stochastic Models with Pair-Based STDP]{On the Spontaneous Dynamics of Synaptic Weights in Stochastic Models
with Pair-Based STDP}

\date{\today}

\author[Ph. Robert]{Philippe Robert}
\email{Philippe.Robert@inria.fr}
\urladdr{http://www-rocq.inria.fr/who/Philippe.Robert}
\address[Ph.~Robert, G.~Vignoud]{INRIA Paris, 2 rue Simone Iff, 75589 Paris Cedex 12, France}
\author[G. Vignoud]{Ga\"etan Vignoud${}^1$}
\email{Gaetan.Vignoud@inria.fr}
\address[G. Vignoud]{ Center for Interdisciplinary Research in Biology (CIRB) - Coll\`ege de France (CNRS UMR 7241, INSERM U1050), 11 Place Marcelin Berthelot, 75005 Paris, France}
\thanks{${}^1$Supported by PhD grant of \'Ecole Normale Sup\'erieure, ENS-PSL}

\ifpdf
\hypersetup{
  pdftitle={On the Spontaneous Dynamics of Synaptic Weights in Stochastic Models
with Pair-Based STDP},
  pdfauthor={Philippe Robert and Ga\"etan Vignoud}
}
\fi

\begin{document}

\begin{abstract}
We investigate spike-timing dependent plasticity (STPD) in the case of a synapse connecting two neural cells.
We develop a theoretical analysis of several STDP rules using Markovian theory.
In this context there are two different timescales, fast neural activity and slower synaptic weight updates.
Exploiting this timescale separation, we derive the long-time limits of a single synaptic weight subject to STDP.
We show that the pairing model of presynaptic and postsynaptic spikes controls the synaptic weight dynamics for small
external input, on  an excitatory synapse.
This result implies in particular that mean-field analysis of plasticity may miss some important properties of STDP.
Anti-Hebbian STDP seems to favor the emergence of a stable synaptic weight, but only for high external input.
In the case of inhibitory synapse the pairing schemes matter less, and we observe convergence of the synaptic weight to
a non-null value only for Hebbian STDP.
We extensively study different asymptotic regimes for STDP rules, raising interesting questions for future works on adaptative neural networks and, more generally, on adaptive systems.
\end{abstract}

\maketitle

\section{Introduction}
\label{sec:intro}
Understanding brain's learning and memory is a challenging topic combining a large spectrum of research fields ranging
from neurobiology to applied mathematics.
Neural networks, through the dynamics of their connections, are able to store complex patterns
over long periods of time, and as such are good candidates for the establishment of memory.
In particular, the intensity $W$ of the connection between two neurons, {\em the synaptic weight}, is seen as an
essential building block to explain learning and memory formation~\cite{takeuchi_synaptic_2014}.

Synaptic plasticity, processes that can modify the synaptic weight, is a complex
mechanism~\cite{citri_synaptic_2008}, but general principles have been inferred from experimental data and used for a
long time in computational models.

Spike-timing dependent plasticity (STDP) gathers plasticity processes that depends on the timing of pre-synaptic
and post-synaptic spiking activity.
Many experimental protocols has been developed to study STDP: most use sequences of spikes pairing from
either side of a specific synapse are presented, at a certain frequency and with a certain delay,
see~\cite{feldman_spike-timing_2012}.

Experiments show that long-term synaptic plasticity is characterized by the coexistence of two different timescales.
Membrane potential and pre/post-synaptic interspike intervals evolve on the order of several
milliseconds, see~\cite{gerstner_spiking_2002}.
Synaptic weights $W$ change on a slower timescale ranging from seconds to minutes before observing an effect of an STDP
protocol on the synaptic weights.
For this reason, a slow-fast approximation is proposed to analyze the associated mathematical models of synaptic
plasticity.
The analysis of slow-fast limits for a general class of STDP models is detailed in~\cite{robert_stochastic_2020_1,
robert_stochastic_2020_2,robert_stochastic_2020}.

Computational models of synaptic plasticity have also used similar scaling principles, see~\cite{kempter_hebbian_1999,
roberts_computational_1999, kistler_modeling_2000, van_rossum_stable_2000}.

In pair-based models,  the synaptic weight updates depend only on $\Delta t{=}t_{\rm post}-t_{\rm pre}$ for a subset of
instants of pre/post-synaptic spikes $t_{\rm pre}/t_{\rm post}$.

{\em Hebbian STDP} plasticity occur when
\begin{itemize}
\item a {\em pre-post pairing}, i.e., $t_{\rm pre}{<}t_{\rm post}$ leads to an increase of the synaptic weight
value (potentiation), which translates into $\Delta W{>}0$;
\item a {\em post-pre pairing}, $t_{\rm post}{<}t_{\rm pre}$, leads to a smaller synaptic weight (depression),
and therefore $\Delta W{<}0$.
\end{itemize}

Hebbian STDP has been observed at many different synapses~\cite{bi_synaptic_1998, feldman_spike-timing_2012}
and is extensively studied in computational models~\cite{kempter_hebbian_1999,
roberts_computational_1999, kistler_modeling_2000, van_rossum_stable_2000, kempter_intrinsic_2001,
rubin_equilibrium_2001, gerstner_mathematical_2002, burkitt_spike-timing-dependent_2004, rumsey_equalization_2004,
standage_computational_2007, gilson_stdp_2011}.

Other types of polarity have been observed experimentally, they are often neglected in theoretical studies of
STDP. For example, {\em Anti-Hebbian STDP} follows the opposite principles, whereby
pre-post pairings lead to depression, and  post-pre pairings to potentiation.
Such plasticities were observed experimentally in the striatum, see~\cite{fino_bidirectional_2005,
roberts_anti-hebbian_2010}.
Different types of STDP rules were analyzed in~\cite{PhysRevE.62.4077, cateau_stochastic_2003,
rumsey_equalization_2004, zou_kinetic_2007, roberts_anti-hebbian_2010, burbank_depression-biased_2012}.

The context, in general, is a single neuron receiving a large number of excitatory inputs subject to
STDP, leading to a Fokker-Plank approach~\cite{van_rossum_stable_2000, rubin_equilibrium_2001,
burkitt_spike-timing-dependent_2004}.
In particular, the importance of a single pairing is diluted over the large number of inputs in the mean-field limit,
whereas by definition STDP relies on the repetition of such correlated pairings.

The use of the pre-/postsynaptic spike correlation function~\cite{kempter_hebbian_1999,kempter_intrinsic_2001,
gerstner_mathematical_2002} was used to study the influence of STDP with high correlated inputs.
However, this method relies on the assumption that all pairs of pre- and postsynaptic spikes impact the synaptic
weight update.
Several studies have questioned this hypothesis~\cite{izhikevich_relating_2003,
morrison_spike-timing-dependent_2007, morrison_phenomenological_2008}, and its influence on the synaptic weight
dynamics has not been discussed in theoretical works, except in~\cite{burkitt_spike-timing-dependent_2004}.

Finally, most studies focus on excitatory inputs, whereas inhibitory synapses also exhibit STDP~\cite{
haas_spike-timing-dependent_2006,feldman_spike-timing_2012}, but few theoretical works exist~\cite{luz_effect_2014}.

Here we develop a theoretical study of a large class of rules, for a system with two neurons and a
single synapse.
This simple setting is used to test the influence of STDP on an excitatory and an inhibitory synapse, for three
different classes of pairing interactions leading to an extensive categorization of the different dynamics.
We show in particular that several interesting properties of the synaptic weight dynamics are lost when using classical
models with numerous excitatory inputs, leading to an underestimation of the role of STDP in learning systems.
\section{Theoretical analysis}
\label{sec:theoretical}

\begin{figure}[ht]\centerfloat\includegraphics{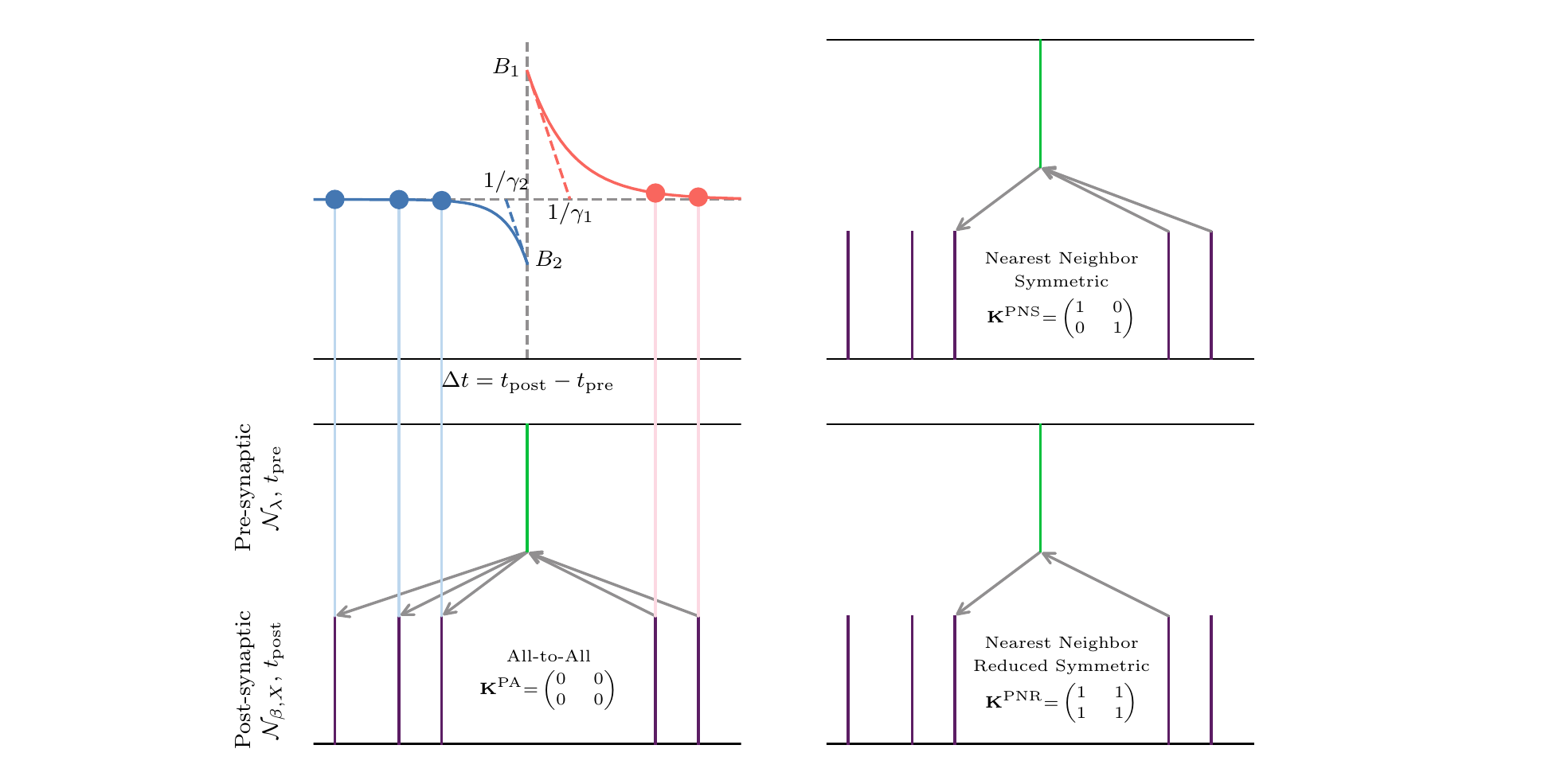}\caption{Synaptic Plasticity Kernels for Pair-Based Rules}
\label{figsub:pair}\end{figure}

\subsection{Spiking neurons and Poisson processes}
\label{secsec:defnot}

The spike train of the pre-synaptic neuron is represented by an homogeneous Poisson process ${\cal N}_{\lambda}$ with $\lambda{>}0$,
where $\delta_x$ is the Dirac measure at $x{\in}\R$, then
\[
{\cal N}_\lambda{=}\sum_{n{\ge}1} \delta_{t_{{\rm pre},n}}, \text{ with  }
0{\le}t_{{\rm pre},1}\le\cdots \le t_{{\rm pre},n}\le\cdots,
\]
with $t_{pre/post}$ the pre- and postsynaptic spike times.

In particular $\P\left(\rule{0mm}{5mm}{\cal N}_{\lambda}(t,t{+}\diff t){\neq} 0 \right)
       {=}\lambda\diff t{+}o(\diff t)$.

We define a stochastic process $(X(t))$ following leaky-integrate dynamics:
\begin{enumerate}
\item It decays exponentially to $0$ with a fixed exponential decay, set to $1$.
\item It is incremented by the synaptic weight $W{>}0$ at each pre-synaptic spike, i.e. at each instant of~${\cal N}_\lambda$.
\end{enumerate}
The firing mechanism of the postsynaptic neuron is driven by an {\em activation function function} $\beta$, when $X$ is $x$,
the output neuron fires at rate $\beta(x)$.
The sequence of instants of post-synaptic spikes $(t_{\textup{post},n})$ is a  point process
${\cal N}_{\beta,X}$ on $\R_+$ such that 
\[
\P\left.\left(\rule{0mm}{5mm}{\cal N}_{\beta,X}(t,t{+}\diff t){\neq} 0 \right| X(t){=}x\right)
{=}\beta(x)\diff t{+}o(\diff t).
\]
The process represents the membrane potential of the postsynaptic neuron in the case of an excitatory synapse.
For simplicity, we chose to differentiate between excitatory and inhibitory synapses at the level of the activation
function, instead of allowing for negative $W$.

Indeed, for an excitatory synapse, the activation function $\beta(x){=}\nu{+}\beta x$ is used, $\nu$ is the rate of the external
input  to the post-synaptic neuron, it models external noise. For inhibitory synapses, we consider $\beta(x){=}\max(\nu{-}\beta x,0)$.

\subsection{Pair-based STDP rules}

We study an important implementation of STDP referred to as pair-based rules.
For a pair $(t_{\rm pre},t_{\rm post})$ of instants of pre- and post-synaptic spikes, the synaptic weight update $\Delta W$
depends only on $\Delta t{=}t_{\rm post}{-}t_{\rm pre}$, as illustrated in Figure~\ref{figsub:pair}.
Most of STDP experimental studies are based on such pairing protocols, where pre- and post-synaptic spikes are repeated
with a fixed delay for a given number of evenly spaced pairings, see~\cite{morrison_phenomenological_2008,
bi_synaptic_1998, feldman_spike-timing_2012}.

An important choice for the model is to decide which pairings to take into account in the plasticity update.
A large choice of different schemes have been analyzed in the literature~\cite{morrison_phenomenological_2008}.
We have chosen to focus  on three versions, that are summarized in Figure~\ref{figsub:pair}:
\begin{itemize}
\item PA: \emph{All-to-all} pair-based model: all pairs of spikes are taken into account in the synaptic
plasticity rule.
\item PNS: {\em Nearest neighbor symmetric} model: whenever one neuron spikes, the synaptic weight is updated by only
taking into account the last spike of the other neuron.
\item PNR: {\em Nearest neighbor symmetric reduced} model: only consecutive pairs of spikes are used to update the
synaptic weight.
\end{itemize}

The synaptic weight update is therefore composed of the sum over relevant spikes, of an kernel $\Phi(\Delta t)$
known as the plasticity curve, here we chose an exponential kernel, given by,
\[
  \Phi(\Delta t) = \begin{cases}
          B_2\exp(\gamma_2 \Delta t) & \Delta t {<} 0, \\
          B_1\exp(-\gamma_1 \Delta t) & \Delta t {>} 0. \\
       \end{cases}
\]
where $B_1,B_2{\in}\R$ represents the amplitude of the STDP and $\gamma_1,\gamma_2>0$ the characteristic time of
interaction, see Figure~\ref{figsub:pair} (top).

All these pair-based rules can be represented by a system of the form
\begin{equation}
\label{eq:pbsystem}
\begin{cases}\diff X(t)  &= {-} X(t)\diff t+W(t{-})\mathcal{N}_{\lambda}(\diff t),\\
    \diff Z_{1}(t) &=   {-}\gamma_1 Z_{1}(t) \diff t\\
        &\hspace{10mm}+(B_1-K_{1,1}Z_{1}(t{-}))\mathcal{N}_{\lambda}(\diff t)\\
        &\hspace{10mm}-K_{1,2}Z_{1}(t{-})\mathcal{N}_{\beta,X}(\diff t),\\
    \diff Z_{2}(t) &=   {-}\gamma_2 Z_{2}(t) \diff t\\
    &\hspace{10mm}+ (B_2{-}K_{2,2}Z_{2}(t{-}))\mathcal{N}_{\beta,X}(\diff t)\\
    &\hspace{10mm}-K_{2,1}Z_{2}(t{-})\mathcal{N}_{\lambda}(\diff t),\\
    \diff W(t) &=  Z_{1}(t{-})\mathcal{N}_{\beta,X}(\diff t) +Z_{2}(t{-})\eps\mathcal{N}_{\lambda}(\diff t)
\end{cases}
\end{equation}
where  $\gamma_{1},\gamma_{2}{>}0$, $B_{1},B_{2}{\in}\R$,  ${\mathbf K}{=}(K_{ij},i,j{\in}\{1,2\}){\in}\{0,1\}^4$.

For the three pair-based STDP rules detailed, we have,
\[
    {\mathbf K}^{\rm PA}{=}\begin{pmatrix}
            0 & 0\\
            0 & 0
            \end{pmatrix},
    \quad     {\mathbf K}^{\rm PNS}{=} \begin{pmatrix}
            1 & 0\\
            0 & 1
            \end{pmatrix},
    \quad    {\mathbf K}^{\rm PNR}{=}\begin{pmatrix}
        1 & 1\\
        1 & 1
        \end{pmatrix}.
\]

Appendix~\ref{app:pi} provides more details on the model and equations.

\subsection{General formulation in a slow-fast system}

We consider that the processes $(X(t))$ and $(Z_1(t),Z_2(t))$ evolve on a fast time scale $t{\mapsto}t/\eps$ for some small $\eps{>}0$.
The increments of the variable $W$ are scaled with the parameter $\eps$, so that the variation on a bounded time-interval is $O(1)$, $(W_\eps(t))$ is described as the {\em slow} process.

An intuitive picture of approximation results used in this paper can be described as follows.
For $\eps$ small, on a short time interval, the slow process $(W_\eps(t))$ is almost constant, and, due to its
faster dynamics, the process $(X_\eps(t),Z_{1,\eps}(t),Z_{2,\eps}(t))$  is ``almost'' at its equilibrium distribution
associated to the current value of $W_\eps(t){\approx}w$. This is also the equilibrium of  the process $(X^w(t),Z_1^w(t),Z_2^w(t))$ such that
\begin{equation}
\label{def:pairstati}
\begin{cases}
    \diff X^w(t) \displaystyle &= {-}X^w(t)\diff t+w\mathcal{N}_{\lambda}(\diff t),\\
    \diff Z_1^w(t) \displaystyle &=   {-}\gamma Z_1^w(t) \diff t \\
    &\hspace{1cm} + (B_1-K_{1,1}Z_1^w(t{-}))\mathcal{N}_{\lambda}(\diff t)\\
    &\hspace{1cm}-K_{1,2}Z_1^w(t{-})\mathcal{N}_{\beta,X^w}(\diff t),\\
    \diff Z_2^w(t) \displaystyle &=   {-}\gamma Z_2^w(t) \diff t\\
    &\hspace{1cm} + (B_2-K_{2,2}Z_2^w(t{-}))\mathcal{N}_{\beta,X^w}(\diff t)\\
    &\hspace{1cm}-K_{2,1}Z_2^w(t{-})\mathcal{N}_{\lambda}(\diff t).
\end{cases}
\end{equation}
Classical results on Markov systems imply that there is unique stationary distribution $\Pi_w^{\mathbf K}$ on`
$\R_+{\times}\R^2$, for simplicity we will denote $\Pi_w^{\text{\rm PX}}{=}\Pi_w^{K^{\text{\rm PX}}}$,
see~\cite{robert_stochastic_2020_1}

Using averaging principle arguments,  the asymptotic dynamic of $(W_\eps(t))$ is given by the ODE,
\begin{align}\label{AsymW}
\frac{\diff w}{\diff t}(t) &= \int_{\R_+{\times}\R^2}(\lambda z_2{+}\beta(x)z_1)\Pi_{w(t)}^{\mathbf K}(\diff x,\diff z)\\
&=\E_{\Pi^{\mathbf K}_{w(t)}}\left[\lambda Z_{2} {+} \beta(X) Z_{1} \right].
\end{align}
A more rigorous development on this result is given in Appendix~\ref{app:slowfast}.
\begin{table*}
\centerfloat
\begin{tabular}{ ccc }
\hline
 \shortstack{\vspace{2em} LTD (Long Term Depression)} & \shortstack{\vspace{2em} $\forall w_0,$ $\displaystyle\lim_{t{\rightarrow}{+}\infty}w(t) = 0$}&
 \shortstack{\vspace{1em} \\ $p_{+\infty} < p_{\text{\rm bif}}$ \\ $p_0 \geq p_{\text{\rm bif}}$ \\ $p_{\text{\rm stable}} < p_{\text{\rm bif}}$ \vspace{1em}} \\ \hline
 \shortstack{\vspace{2em} LTP (Long Term Potentiation)} & \shortstack{\vspace{2em} $\forall w_0,$ $\displaystyle\lim_{t{\rightarrow}{+}\infty}w(t) = +\infty$}&
 \shortstack{\vspace{1em} \\ $p_{+\infty} \geq p_{\text{\rm bif}}$ \\ $p_0 < p_{\text{\rm bif}}$ \\ $p_{\text{\rm stable}} < p_{\text{\rm bif}}$ \vspace{1em}} \\ \hline
 \shortstack{\vspace{2em} UNSTABLE Fixed Point} & \shortstack{\vspace{1em} \\ $\exists w_{\textup{eq}}, \forall w_0 {<} w_{\textup{eq}},$  $\displaystyle\lim_{t{\rightarrow}{+}\infty}w(t) = 0$ \\ and $\forall w_0{>}w_{\textup{eq}},$ $\displaystyle\lim_{t{\rightarrow}{+}\infty}w(t) = +\infty$\vspace{1em}} &
 \shortstack{\vspace{1em} \\ $p_{+\infty} \geq p_{\text{\rm bif}}$ \\ $p_0 \geq p_{\text{\rm bif}}$ \\ $p_{\text{\rm stable}} < p_{\text{\rm bif}}$ \vspace{1em}} \\ \hline
 \shortstack{\vspace{2em} STABLE Fixed Point} & \shortstack{\vspace{1em} \\ $\exists w_{\textup{eq}},$ $\forall w_0, \displaystyle\lim_{t{\rightarrow}{+}\infty}w(t) =  w_{\textup{eq}}$\vspace{2em}}&
 \shortstack{\vspace{1em} \\ $p_{+\infty} < p_{\text{\rm bif}}$ \\ $p_0 < p_{\text{\rm bif}}$ \\ $p_{\text{\rm stable}} \geq p_{\text{\rm bif}}$ \vspace{1em}} \\ \hline
  \shortstack{\vspace{2em} MULTIPLE Fixed Point} & \shortstack{\vspace{1em} \\ Other behaviors\vspace{2em}}&
 \shortstack{Complementary set \vspace{2em}} \\  \hline
\end{tabular}
\caption{Different possible behaviors, theoretical definitions and numerical estimations}
\label{table:bif}
\end{table*}
\subsection{Computer simulations}
\label{secsec:methods}

To compare different dynamics, synapses and pairing schemes, we perform, for each set of parameters,
independent simulations and from this array of dynamics we compute several variables:
\begin{itemize}
\item The probability of diverging to infinity, $p_{+\infty}{=}\P\left(W_{\eps}(t){=}+\infty\right)$, approximated by
the proportion of simulations where the synaptic weight goes above $w_{\text{\rm max}}$.
\item The probability of converging  to $0$, $p_{0}{=}\P\left( W_{\eps}(t){=}0\right)$, approximated by
the proportion of simulations whose synaptic weight goes below $0$.
\item The probability to have a stable fixed point defined by the complementary probability
$p_{\text{\rm st}}{=}1{-}p_{+\infty}{-}p_{0}$.
\end{itemize}

Five different asymptotic behaviors for $w$, solution of~\eqref{AsymW} are defined using analytical asymptotic properties
in Table~\ref{table:bif}.
We define numerical approximates of these possible behaviors, depending on the values of $p_{+\infty}$, $p_{0}$ and
$p_{\text{\rm st}}$, and a fixed parameters $p_{\text{\rm bif}}$.
\section{Results}
\label{sec:results}

\begin{figure}[h!]
\centerfloat
\includegraphics{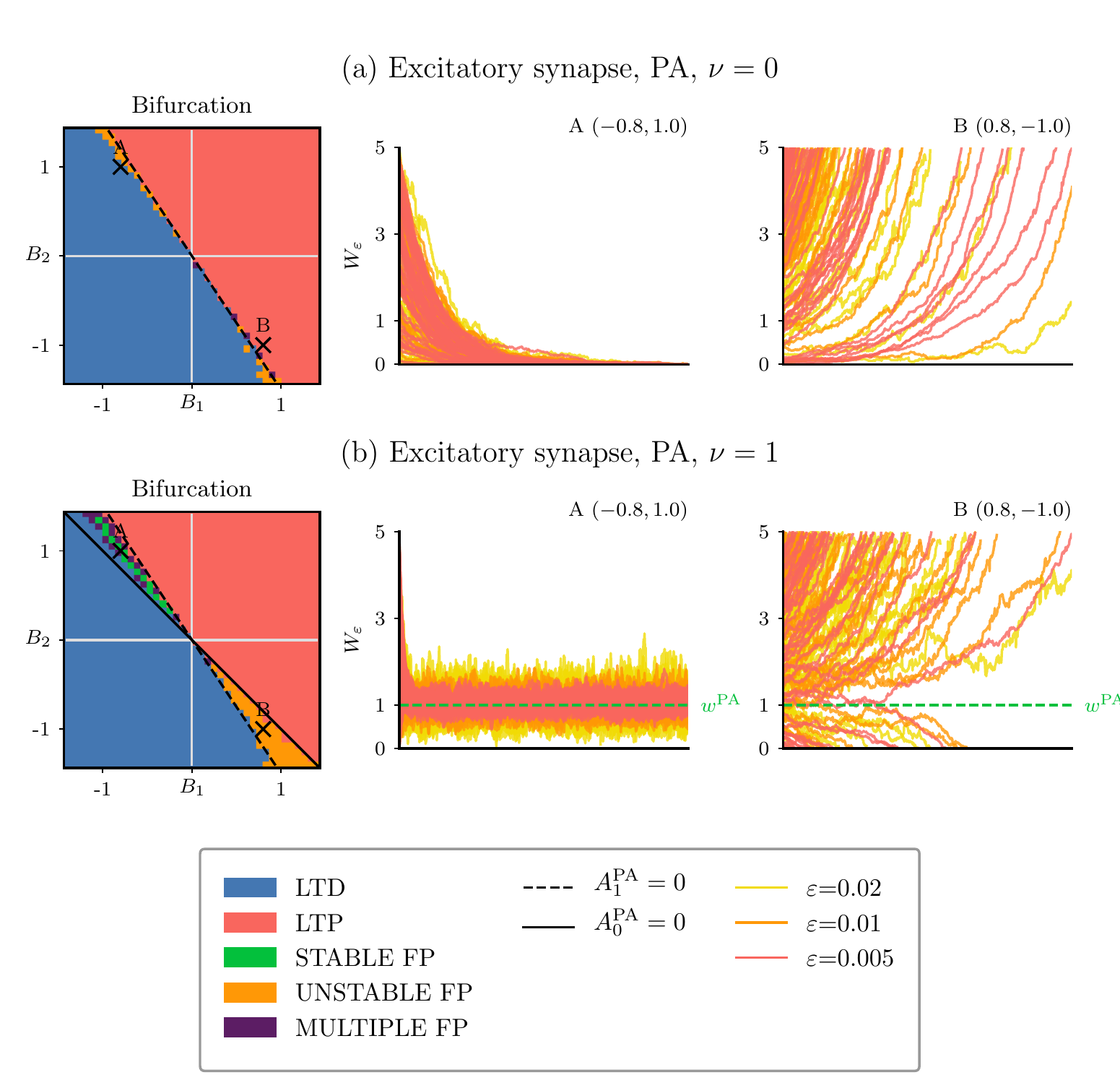}\caption{All-to-all pair-based STDP for an excitatory synapse}\label{fig:FigurePA}\end{figure}

In this framework, we study the asymptotic behavior of the dynamical system~\eqref{AsymW} for the three pair-based rules,
\[
 \frac{\diff w}{\diff t}(t){=} f^{\mathbf K}(w(t)), \text{ with }  f^{\mathbf K}(w){\steq{def}}\E_{\Pi^{{\mathbf K}}_{w}}\left[\lambda Z_{2} {+} \beta(X) Z_{1} \right],
 \] 
${\mathbf K}{\in}\{{\rm PA},{\rm PNS},{\rm PNR}\}$.

We will show that the synaptic weight $w(t)$ usually end up having one of three different asymptotic behaviors, which all
have a biological interpretation:
\begin{itemize}
\item Convergence of $w(t)$ towards $0$, which corresponds the disconnection (or pruning) of the synapse: the presynaptic neuron loses
its ability to influence the postsynaptic neuron.
\item Divergence of $w(t)$ to infinity, leading an unstable system which, in a biological system, will be stopped by saturation mechanisms.
\item Convergence to a non null value $w_{eq}$, resulting in self-sustained activity, i.e., pre- and postsynaptic activities coupled with
STDP are sufficient to have a bounded stable synaptic weight.
\end{itemize}

\subsection{Stability and divergence depends on the polarity of STDP}
\label{secsec:exc}

Starting with the all-to-all scheme for an excitatory synapse, i.e.  $\beta(x){=}\nu{+}\beta x$,   we have,
\[
    \frac{\diff w}{\diff t}(t) = f^{\text{\rm PA}}(w) =A^{\text{\rm PA}}_0{+}A^{\text{\rm PA}}_{1}w = A^{\text{\rm PA}}_1\left(w{-}w^{\text{\rm PA}}\right).
\]
with
\begin{multline*}
    A^{\text{\rm PA}}_{0} \steq{def} \nu\lambda\left(\frac{B_1}{\gamma_1}{+}\frac{B_2}{\gamma_2}\right),\\
    A^{\text{\rm PA}}_{1} \steq{def} \beta\lambda^2\left(\frac{B_1}{\gamma_1}{+}\frac{B_2}{\gamma_2}{+}\frac{B_1}{\lambda(1{+}\gamma_1)}\right) \text{ and } \\
    w^{\text{\rm PA}}\steq{def} {-}A^{\text{\rm PA}}_{0} / A^{\text{\rm PA}}_{1}.
\end{multline*}
The calculation is detailed in Appendix~\ref{app:proofPAEXC}. The signs of  $A^{\text{\rm PA}}_0$ and $A^{\text{\rm PA}}_1$  determine in fact the asymptotic
behavior of $w$. We study the impact of $B_1$ and $B_2$ with, or without, external input rate $\nu$ on the dynamics in~Figure~\ref{fig:FigurePA}.

If $\nu{=}0$, then $w^{\text{\rm PA}}{=}0$. Without external input $\nu$, the synaptic weights cannot converge to a positive stable solution.

\begin{itemize}
    \item If $A^{\text{\rm PA}}_1{<}0$, $(w(t))$ converges to $0$, as shown by the blue region of
Figure~\ref{fig:FigurePA} (a),  with some examples of dynamics at points $(A)$ and $(B)$ below.
    \item If $A^{\text{\rm PA}}_1{>}0$, $(w(t))$ diverges to $+\infty$, the red region of Figure~\ref{fig:FigurePA} (top) and example $(C)$.
\end{itemize}

If $\nu{>}0$, $w^{\text{\rm PA}}$ is a positive fixed point.  This gives two new behaviors in the bifurcation map, see Figure~\ref{fig:FigurePA} (b).

\begin{itemize}
    \item If $A^{\text{\rm PA}}_1{>}0$ and $A^{\text{\rm PA}}_0{<}0$, the fixed point is unstable (orange region), the example (B) shows
that in that case, the dynamics depends on the initial value of synaptic weight.  It diverges to $+\infty$ if starting above $w^{\text{\rm PA}}$, and converges to $0$ otherwise. 
    \item If $A^{\text{\rm PA}}_1{<}0$ and $A^{\text{\rm PA}}_0{>}0$, the fixed point is stable (green region) and all  simulations
converge to $w^{\text{\rm PA}}$ independently of the initial point. See example (C).
\end{itemize}
\subsection{Influence of pairing scheme}
\label{secsec:excPN}

\begin{figure}\centerfloat\includegraphics{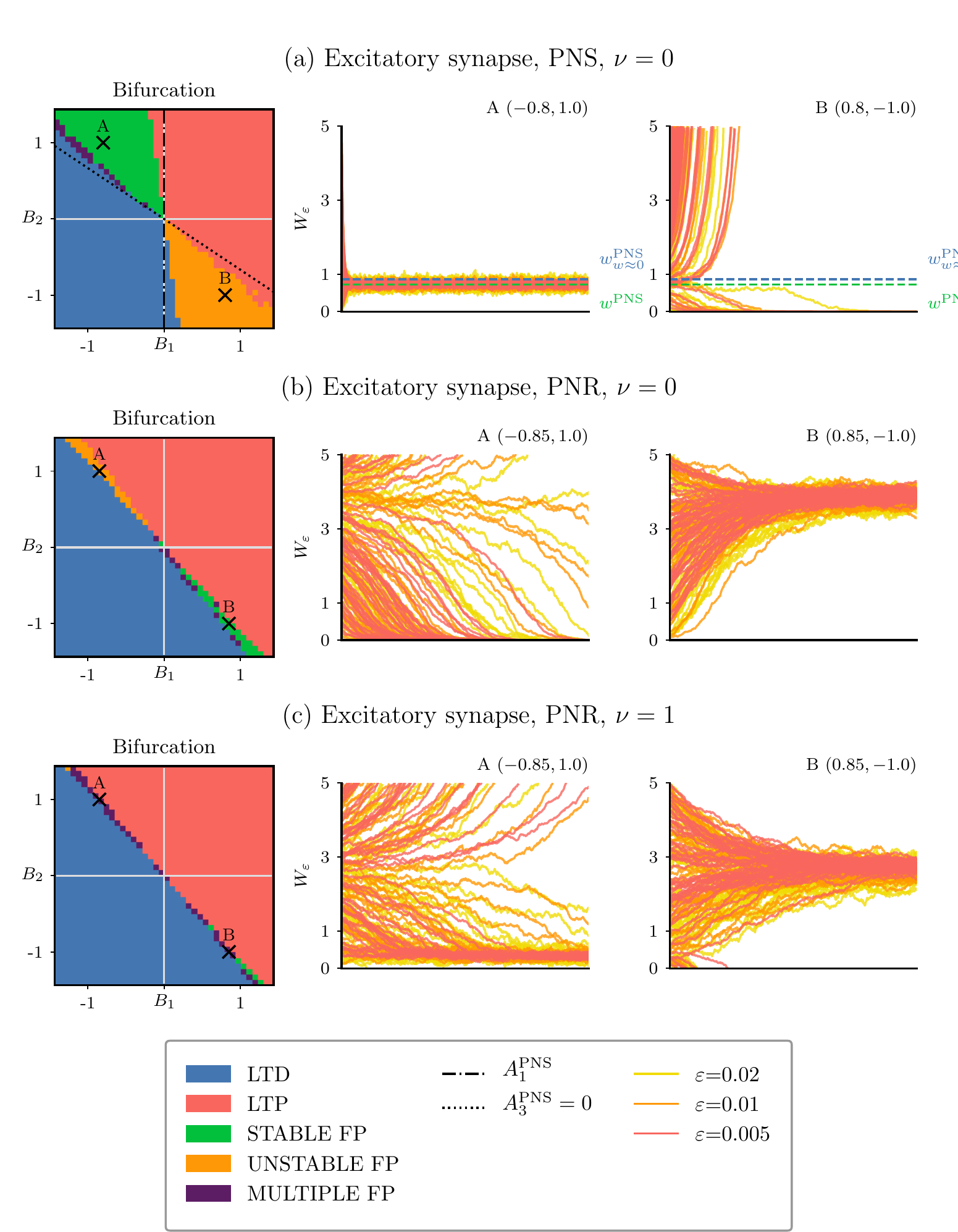}\caption{Different pairing schemes leads to diverse dynamics}\label{fig:FigurePIBIF}\end{figure}

\subsubsection*{Nearest neighbor symmetric STDP}

For nearest neighbor symmetric STDP with $\beta(x){=}\nu{+}\beta x$, we derive in Appendix~\ref{app:proofPNSEXC_f}, the associated dynamical system,
\[
    \frac{\diff w}{\diff t}(t) = f^{\text{\rm PNS}}(w) \steq{def} A^{\text{\rm PNS}}_0 {+} A^{\text{\rm PNS}}_1w {+} A^{\text{\rm PNS}}_2 h^{\text{\rm PNS}}(w),
\]
with,
\begin{align*}
    A^{\text{\rm PNS}}_0 &\steq{def} \frac{\nu\lambda}{\lambda{+}\gamma_1} B_1 {+} \frac{\nu\lambda}{\nu{+}\gamma_2} B_2, \\
    A^{\text{\rm PNS}}_1 &\steq{def} \lambda \beta \frac{1{+}\lambda}{1{+}\lambda{+}\gamma_1} B_1,\ A^{\text{\rm PNS}}_2{=}\lambda B_2,
\end{align*}
and
\begin{multline*}
    h^{\text{\rm PNS}}(w)\steq{def}\gamma_2 \int_{\R_+}e^{-\gamma_2\tau}\left(1- \exp\left(\rule{0mm}{4mm}{-}\nu \tau\right.\right.\\
    \left.\left.{-}\lambda \hspace{-1mm}\int_{0}^{\tau} \hspace{-2mm} \left(1{-}\exp\left(-\beta w \left(1{-}e^{s-\tau}\right)\right)\right)\diff s\right.\right.\\
    \left.\left.\rule{0mm}{4mm}{-}\lambda \int_{-\infty}^{0} \hspace{-2mm} \left(1{-}\exp\left(-\beta w \left(1{-}e^{-\tau}\right)e^s\right)\right)\diff s\right)\right) \diff \tau - \frac{\nu}{\nu{+}\gamma_2}.
\end{multline*}

The asymptotic behavior of $(w(t))$ can be analyzed rigorously in this case, details in Appendix~\ref{app:proofPNSEXC_dynamics}.

If $\nu{=}0$, let
\[
    A^{\text{\rm PNS}}_3 \steq{def} {f^{\text{\rm PNS}}}^{\prime}(0){=} \lambda \beta\left(\frac{1+\lambda}{1+\lambda+\gamma_1} B_1{+} \frac{\lambda}{\gamma_2} B_2\right)
\]

\begin{itemize}
\item If $A^{\text{\rm PNS}}_1{<}0$ and $A^{\text{\rm PNS}}_3{<}0$,  $(w(t))$ converges to $0$ in finite time (blue).
\item If $A^{\text{\rm PNS}}_1{>}0$ and $A^{\text{\rm PNS}}_3{>}0$, The system diverges to infinity when both parameters are positive (red).
\item If $A^{\text{\rm PNS}}_1{<}0$ and $A^{\text{\rm PNS}}_3{>}0$, a stable fixed point $w^{\text{\rm PNS}}$ exists  (green),  see example $A $  $(B_1,B_2){=}(-0.8,1)$.
\item If $A^{\text{\rm PNS}}_1{>}0$ and $A^{\text{\rm PNS}}_3{<}0$,  an unstable fixed point $w^{\text{\rm PNS}}$ exists (orange), example B.
\end{itemize}

In Appendix~\ref{app:proofPNSEXC_dynamics}, we prove the existence of the fixed point $w^{\text{\rm PNS}}$ and provided a numerical estimation in Figure~\ref{fig:FigurePIBIF}(a).
We compute an approximation of $w^{\text{\rm PNS}}$ when $w{\approx}0$ in~\ref{app:proofPNSEXC_dynamics} is given,
Figure~\ref{fig:FigurePIBIF}(a) shows a comparison with numerical experiments.

The picture is similar for the case $\nu{>}0$, with slightly different conditions (Appendix~\ref{app:proofPNSEXC_dynamics} and Figure~\ref{fig:FigurePIBIF_PNS}).

\medskip
\noindent
{\bf Discussion}. Nearest neighbor symmetric STDP has significant differences with the all-to-all scheme.
First, a positive stable (or unstable) fixed point may exist in the absence of external noise. 
The condition on $A^{\text{\rm PNS}}_1$ is a condition on $B_1$ only.  If $B_1{<}0$ the system either converges to $0$ or to a positive fixed point, and similarly when $B_1{>}0$. The all-to-all case does not exhibit such a simple behavior, because $A^{\text{\rm PA}}_0$ and $A^{\text{\rm PA}}_1$
both depend on $B_1$ and $B_2$.

\subsubsection*{Nearest neighbor symmetric reduced STDP}

A theoretical study of $(w(t))$ solution of~\eqref{AsymW}  with $\beta(x){=}\nu{+}\beta x$ is possible, but more involved than
for PA and PNS. Some indications are given in the Appendix~\ref{app:proofPNREXC}.
Computer simulations were done using this scheme and the results are illustrated in Figure~\ref{fig:FigurePIBIF}(b) and (c).
Surprisingly, we observe two different dynamics depending on the values of $\nu$.

For $\nu{=}0$, there exists a (narrow) range of parameters in the Hebbian region (bottom right) where a stable fixed
point occurs, see example (B) in Figure~\ref{fig:FigurePIBIF}(b).
Symmetrically, an unstable fixed point seems to exist in the anti-Hebbian region
(top left) and example (A).

For $\nu{>}0$, a second fixed point appears leading to more complex behaviors characterized by the presence of a
stable and an unstable fixed point at the same time~Figure~\ref{fig:FigurePIBIF}(c).
If the stable fixed point is lower than the unstable one, see example (A) and (top left) in
Figure~\ref{fig:FigurePIBIF}(c), the synaptic weight either converges to a non null value or diverges to infinity.
For Hebbian parameters (bottom right), the situation is reversed, see example (B) in Figure~\ref{fig:FigurePIBIF}(c).
The spectrum of values with this complex behavior narrows when $\nu$ is increasing.
In particular, for large values of $\nu$, only a perfect balance in the parameters may lead to other behaviors than whole depression or potentiation.
We have studied this influence of $\nu$, on the dynamics, for $B_1$ and $B_2$ constant in Figure~\ref{fig:EXTERNAL}.

\medskip
\noindent {\bf Discussion}. 
There are several differences of interest with the two other STDP pair-based rules for an excitatory synapse.
First, for all-to-all and nearest neighbor symmetric pairings at an excitatory synapse, the stable fixed point
only appears for anti-Hebbian parameters, $B_1{<}0$, whereas an unstable one exists for
Hebbian STDP, $B_1{>}0$. With nearest neighbor symmetric reduced STDP, we have numerically shown that a more
complex behavior with several fixed points may occur. 

Second, the nearest neighbor symmetric reduced STDP needs an almost exact balance of the parameters to
enable convergence of the system toward a fixed point.

Table~\ref{table:compPIexc} gathers up all results for an excitatory synapse.
\subsection{All-to-all  STDP with an  inhibitory synapse}
\label{secsec:inh}

\begin{figure}[h!]\centerfloat\includegraphics{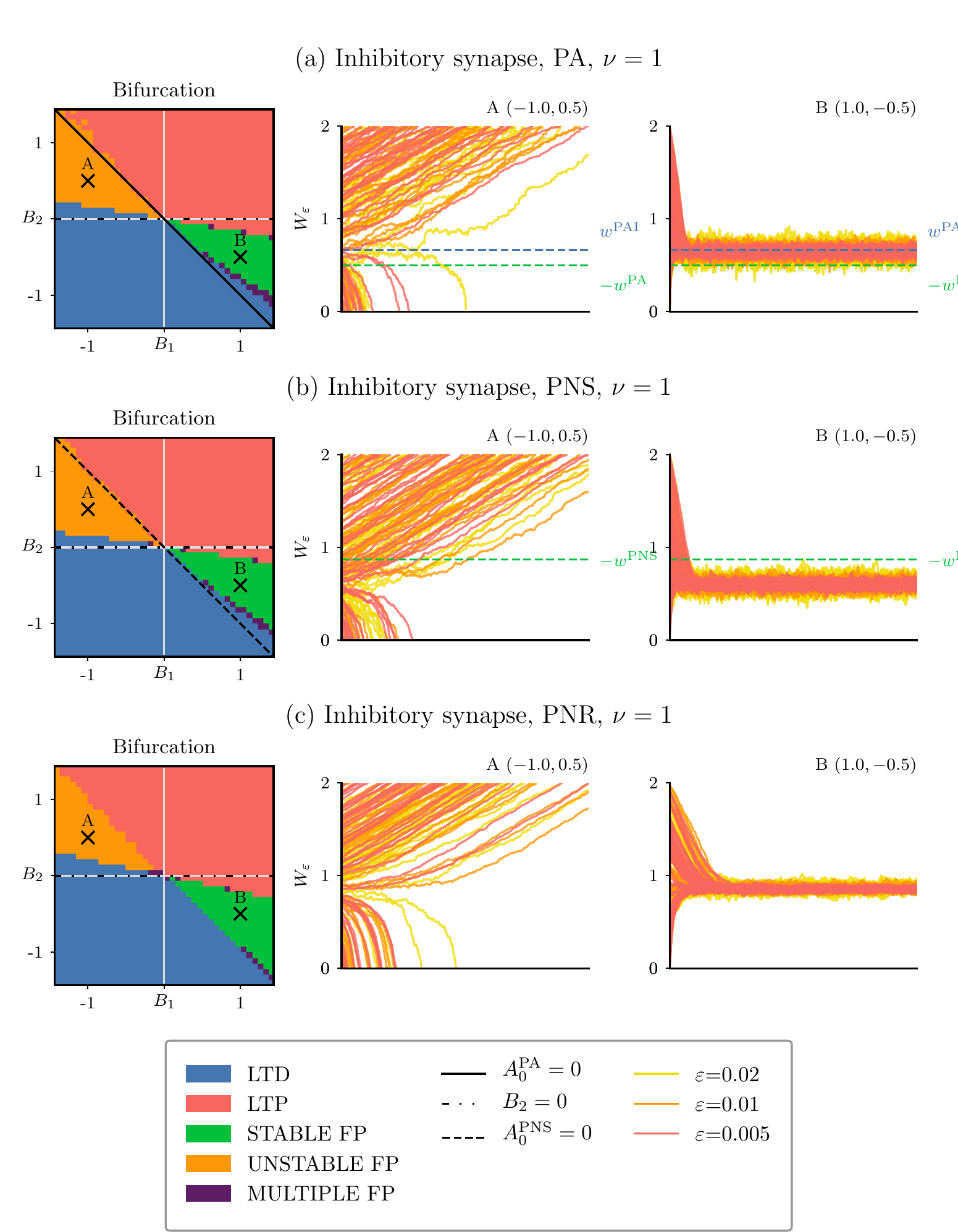}\caption{Pair-based STDP for an inhibitory synapse}\label{fig:FigurePII}\end{figure}

We now study the dynamics~\eqref{AsymW} of the synaptic weight for an inhibitory synapse, i.e. when $\beta(x){=}(\nu{-}\beta x)^+$.

Computations of $f^{\text{\rm PA}}$ are detailed in Appendix~\ref{app:proofPAINH}.
We restrict our study to two cases.

For small $w$,
\begin{align*}
    \frac{\diff w}{\diff t}(t) &= f^{\text{\rm PA}}(w) = A^{\text{\rm PA}}_0 {-} A^{\text{\rm PA}}_{1}w{+} o(w) \\
        &= {-}A^{\text{\rm PA}}_1\left(w{+}w^{\text{\rm PA}}\right) {+} o(w).
\end{align*}
with $A^{\text{\rm PA}}_{0/1}$ defined before.

When $w \geq \nu /\beta$, we have
\[
\frac{\diff w}{\diff t}(t) = \frac{A^{\text{\rm PAI}}}{w(t)^{\lambda+\gamma_1}}\left(1 {+} \eta^{\text{\rm PAI}}\left[\frac{w(t)}{w^{\text{\rm PAI}}} \right]^{\gamma_1} \right)
\]
with
\[
    A^{\text{\rm PAI}} \steq{def} \left[ \frac{\nu}{\beta} \right]^{\lambda+\gamma_1}\frac{c(\lambda) B_1 \nu}{\left( \lambda+\gamma_1\right)\left( \lambda{+}\gamma_1{+}1\right)},
\]
\[
    w^{\text{\rm PAI}} \steq{def} \frac{\beta}{\nu} \left(\left| \frac{B_2}{B_1} \right|\frac{\left( \lambda{+}\gamma_1\right)\left( \lambda+\gamma_1{+}1\right)}{\gamma_2\left( \lambda+1\right)}\right)^{1/\gamma_1},
\]
\[
    \eta^{\text{\rm PAI}} \steq{def} \left| \frac{B_2}{B_1} \right| \frac{B_1}{B_2}.
\]

For stability properties, the two relevant parameters are $A^{\text{\rm PA}}_{0}$ and $B_2$.

\begin{itemize}
\item For $B_2{>}0$ and $A^{\text{\rm PA}}_{0}{>}0$, the synaptic weight diverges to infinity (in red),
\item For $B_2{<}0$ and $A^{\text{\rm PA}}_{0}{<}0$, it converges to $0$ in finite time (in blue).
\item For $B_2{>}0$ and $A^{\text{\rm PA}}_{0}{<}0$, there is an
unstable fixed point (orange, example A).
\item For $B_2{<}0$ and $A^{\text{\rm PA}}_{0}{>}0$, the system exhibits a stable equilibrium (green, example B).
\end{itemize}

We note here an inversion with the properties observed for the excitatory synapse, where only anti-Hebbian
STDP led to a stable fixed point, compared to the inhibitory case where only Hebbian STDP elicits this type
of behavior.

Moreover, $A^{\text{\rm PA}}_{0}{=}0$ corresponds to the line $w^{\text{\rm PA}}{=}0$, suggesting that an
important parameter for the classification of behavior is the range of parameters where $w^{\text{\rm PA}}{=}0$ for
the excitatory case.

This analysis is completed with the other schemes in Figure~\ref{fig:FigurePII}(b) for PNS and Figure~\ref{fig:FigurePII}(c)
for PNR. The dynamics are similar to the all-to-all case for this range of parameters, only the values of the
fixed points seems to change (compare B for the three cases).
For the nearest neighbor symmetric STDP, we also plotted the line $w^{\text{\rm PNS}}{=}0$, as it could be
related to the change of dynamics following the analysis of the all-to-all case.

All these behaviors are gathered in Table~\ref{table:compPIinh}.
It is striking that the pairing scheme does not seem to have a decisive impact on the dynamics for
an inhibitory synapse, contrarily to the case of an excitatory synapse.
This may be due to the fact that for the inhibitory case, we need to have a constant external input in order
to have spikes. We note here that, for the sake of simplicity, we only tested cases where $\nu{=}1$.
\section{Conclusion}
\label{sec:conclu}

We have developed a simple and rigorous analysis of synaptic weight dynamics via
a slow-fast approximation  and numerical simulations.
For an excitatory synapse, anti-Hebbian STDP can lead to a stable fixed point, with some slight variations
depending on the pairing scheme used.
In particular, for all-to-all STDP rules, a fixed point exists only for positive external rate $\nu$, whereas
for nearest symmetric reduced scheme, at least two fixed points exists for balanced STDP rules.
Moreover, for an inhibitory synapse, numerical arguments showed that
all schemes were similar, with the existence of a stable fixed point for Hebbian STDP.

In a learning paradigm, a subset of correlated neurons are able to repeatedly trigger an action potential of the postsynaptic neuron, even in the presence of noise. It is not a surprise then if this regime of activity led to the
most interesting behaviors for the synaptic weight dynamics of our study.
Indeed, when the influence of a single neuron (or by extension a group of correlated neurons) is not negligible compared to the rest of inputs (when $\lambda w{>}\nu$), the asymptotic behavior of the synaptic weight highly depends on the polarity of the STDP curve and the pairing scheme.

On the contrary, when the impact of the presynaptic neuron spikes is lost in the external noise (consistent with a
large number of external uncorrelated inputs, $\lambda w{<}\nu$), pairing schemes do not influence the type of
dynamics observed.
Indeed, the influence of `direct' and `repetitive' pairings is lost in the large noise limit: in mean-field models, the synaptic weight dynamics is essentially driven by the {\em mean} synaptic weight,
see~\cite{babadi_stability_2016}.

This work highlights the fact that the choice of spikes to take into account in STDP is an essential part of the modeling process.
In particular, this conclusion should apply to more complex pairing schemes such as triplets rules~\cite{pfister_beyond_2006, babadi_stability_2016} or more complex calcium-based rules.

If this article focuses on a single synapse dynamics, its conclusions can be used to explain some of the  results from the literature on the influence of STDP in recurrent networks~\cite{burkitt_spike-timing-dependent_2007, gilson_stdp_2010, triplett_emergence_2018}. \cite{locherbach_large_2017} studies short-term plasticity in a large network
and~\cite{lucken_noise-enhanced_2016}  the noise-enhanced coupling of two excitatory neurons subject to STDP, which can be extended to the formation of multiclusters in adaptive networks~\cite{berner_multiclusters_2019}.
It would be challenging to extend our results to large stochastic networks with plastic synapses where theoretical studies are scarce.
Multi-dimensional auto-exciting/inhibiting processes are an important tool in this context. In particular  Hawkes processes, see~\cite{reynaud-bouret_inference_2013,costa_renewal_2018}. This is a promising approach toward a better understanding of learning in adaptive neural systems.

\printbibliography

\appendix
\section{Computer methods}
\label{app:computer}
For each set of parameters, we have run several simulations, with different initial weight values uniformly taken
in $[0, w_{\text{\rm max}}]$.
We have tested the dynamics of the synaptic weight for the different pairing schemes defined before for
a wide range of parameters.
Simulations have been done using Python 3.X for the simple network of a pre-synaptic and a post-synaptic neuron.
We used a discrete Euler scheme for the dynamics of the membrane potential $X$ and the
plasticity variables $Z_{1}$ and $Z_{2}$.
Whenever the synaptic weight was either $0$ or a maximal value $w_{\text{\rm max}}$ the dynamics was stopped
and the synaptic weight state recorded.

We also plot the temporal dynamics for specific values of $B_1$ and $B_2$, typically used $P{=}50$ simulations for each scaling $\eps$.

\section{Pair-based STDP with different pairing schemes}
\label{app:pi}
\subsubsection*{All-to-all Model}

The \emph{all-to-all} pair-based model supposes that all pairs of spikes are taken into account in the synaptic
plasticity rule.
The synaptic weight is updated at each post-synaptic spike occurring at time
$t_{\rm post}$, by taking into account all   pre-synaptic spikes before time $t_{\rm post}$:
\[
  \Delta W (t_{\rm post}) = B_1\sum_{t_{{\rm pre},n}{<}t_{\rm post}}e^{-\gamma_1(t_{\rm post}{-}t_{{\rm pre}, n})}
  = Z_1^{\rm PA}(t_{\rm post})
\]
and,
\[
  \Delta W (t_{\rm pre}) = B_2\sum_{t_{{\rm post}, n}{<}t_{\rm pre}}e^{-\gamma_2(t_{\rm pre}{-}t_{{\rm post}, n})}
  = Z_2^{\rm PA}(t_{\rm pre})
\]
The processes $(Z_i^{\rm PA}(t))$, $i{=}1$, $2$ can be expressed as solutions of the stochastic differential equations,
\begin{equation}\label{SDEAA}
\begin{cases}
\diff Z_1^{\rm PA}(t) \displaystyle =   {-}\gamma_1 Z_1^{\rm PA}(t) \diff t+B_1\mathcal{N}_{\lambda}(\diff t),\\
\diff Z_2^{\rm PA}(t) \displaystyle =   {-}\gamma_2 Z_2^{\rm PA}(t) \diff t+B_2\mathcal{N}_{\beta,X}(\diff t),
\end{cases}
\end{equation}
they are two shot-noise processes, see~\cite{gilbert_amplitude_1960,robert_stochastic_2020_1}.

The synaptic weight updates correspond to the evaluation of $(Z_1^{\rm PA}(t))$ at jumps of the point process
${\cal N}_{\beta, X}$ for post-synaptic activity, and similarly for $(Z_2^{\rm PA}(t))$ with
${\cal N}_{\lambda}$,
\begin{align*}
    \diff W^{\rm PA}(t) &= \sum_{t_{{\rm pre}, n}}Z_2^{\rm PA}(t_{{\rm pre}, n}-)\delta_{t_{{\rm pre}, n}}\\
                        &\hspace{1cm }+ \sum_{t_{\rm post}, n}Z_1^{\rm PA}(t_{{\rm post}, n}-)\delta_{t_{\rm post}, n},
\end{align*}
or, equivalently,
\[
    \diff W^{\rm PA}(t) = Z_2^{\rm PA}(t-){\cal N}_{\lambda}(\diff t) + Z_1^{\rm PA}(t-){\cal N}_{\beta, X}(\diff t).
\]
The notation $U(t{-})$ is for the left limit of the function $(U(t))$ at $t$.
A simple example of the dynamics of the all-to-all pair-based model is depicted in Figure~\ref{fig:markovpairbased} (A)
with interacting pairs of spikes.

\subsubsection*{Nearest-neighbor symmetric model}

In the {\em nearest neighbor symmetric} model, whenever one neuron spikes, the synaptic weight is updated by only
taking into account the last spike of the other neuron, as can be seen in Figure~\ref{figsub:pair}.
If the pre-synaptic neuron fires at time $t_{\rm pre}$, the contribution to the plasticity kernel is
$\Phi(t_{\rm pre}{-}t_{\rm post})$ , where $t_{\rm post}$ is the last post-synaptic spike before $t_{\rm pre}$
and similarly for post-synaptic spikes.

The nearest neighbor symmetric rule leads to,
\begin{equation}\label{SDEPNS}
\begin{cases}
\diff Z_1^{\rm PNS}(t) \displaystyle &=
        {-}\gamma_1 Z_1^{\rm PNS}(t) \diff t\\
        &\hspace{-10mm}+(B_1 - Z_1^{\rm PNS}(t-))\mathcal{N}_{\lambda}(\diff t),\\
\diff Z_2^{\rm PNS}(t) \displaystyle &=
        {-}\gamma_2 Z_2^{\rm PNS}(t) \diff t\\
        &\hspace{-10mm}+(B_2 - Z_2^{\rm PNS}(t-)))\mathcal{N}_{\beta,X}(\diff t).
\end{cases}
\end{equation}
At each pre-synaptic spike, $(Z_1^{\rm PNS}(t))$, resp. $(Z_2^{\rm PNS}(t))$,  is reset to $B_1$, resp. $B_2$. See Figure~\ref{fig:markovpairbased} (B).

\subsubsection*{Nearest-neighbor symmetric reduced model}

Finally, for the {\em nearest neighbor symmetric reduced} scheme, only consecutive pairs of spikes are used to update the
synaptic weight. The synaptic weight is updated at pre-synaptic spike time $t_{\rm pre}$ only if there are no pre-synaptic spikes
 since the last post-synaptic spike. And similarly for post-synaptic spike times.
See Figure~\ref{figsub:pair}~(bottom right).

This rule leads to $(Z_i^{\rm PNR}(t))$, $i{=}1$, $2$, solutions of
\begin{equation}\label{SDEPNR}
\begin{cases}
\diff Z_1^{\rm PNR}(t) \displaystyle &=
        {-}\gamma_1 Z_1^{\rm PNR}(t) \diff t\\
        &\hspace{-10mm}+(B_1 - Z_1^{\rm PNR}(t-))\mathcal{N}_{\lambda}(\diff t)\\
        &\hspace{-10mm}- Z_1^{\rm PNR}(t-))\mathcal{N}_{\beta,X}(\diff t),\\
\diff Z_2^{\rm PNR}(t) \displaystyle &=
        {-}\gamma_2 Z_2^{\rm PNR}(t) \diff t\\
        &\hspace{-10mm}+(B_2 - Z_2^{\rm PNR}(t-)))\mathcal{N}_{\beta,X}(\diff t)\\
        &\hspace{-10mm}- Z_2^{\rm PNR}(t-)\mathcal{N}_{\lambda}(\diff t).
\end{cases}
\end{equation}
See Figure~\ref{fig:markovpairbased} (C).

\section{Slow-fast approximations, averaging principles}
\label{app:slowfast}
\label{secsec:theorem}

We have the scaled system, for $\eps{>}0$,
\begin{equation}
\label{eq:pbsystemscaled}
\begin{cases}\diff X_\eps(t)  &= {-}1/\eps X_\eps(t)\diff t+W_\eps(t{-})\mathcal{N}_{\lambda/\eps}(\diff t),\\
    \diff Z_{1,\eps}(t) &=   {-}\gamma_1 Z_{1,\eps}(t) \diff t/\eps \\
        &\hspace{2mm}+(B_1-K_{1,1}Z_{1,\eps}(t{-}))\mathcal{N}_{\lambda/\eps}(\diff t)\\
        &\hspace{2mm}-K_{1,2}Z_{1,\eps}(t{-})\mathcal{N}_{\beta/\eps,X_\eps}(\diff t),\\
    \diff Z_{2,\eps}(t) &=   {-}\gamma_2 Z_{2,\eps}(t) \diff t/\eps \\
    &\hspace{2mm}+ (B_2{-}K_{2,2}Z_{2,\eps}(t{-}))\mathcal{N}_{\beta/\eps,X_\eps}(\diff t)\\
    &\hspace{2mm}-K_{2,1}Z_{2,\eps}(t{-})\mathcal{N}_{\lambda/\eps}(\diff t),\\
    \diff W_\eps(t) &=  Z_{1,\eps}(t{-})\eps\mathcal{N}_{\beta/\eps,X_\eps}(\diff t)\\
    &\hspace{2mm}+Z_{2,\eps}(t{-})\eps\mathcal{N}_{\lambda/\eps}(\diff t)
\end{cases}
\end{equation}
where  $\gamma_{1},\gamma_{2}{>}0$, $B_{1},B_{2}{\in}\R$,  ${\mathbf K}{=}(K_{ij},i,j{\in}\{1,2\}){\in}\{0,1\}^4$.

Approximations of $(W_\eps(t))$ solution of~\eqref{eq:pbsystem} when $\eps$ is small are discussed and investigated with ad-hoc methods.
The corresponding scaling results, known as separation of timescales, are routinely used in approximations in
mathematical models of computational neuroscience, for example~\cite{kempter_hebbian_1999}.

We first need to define the processes $(X^w(t),Z_1^w(t),Z_2^w(t))$ which follow
the fast processes dynamics with a constant synaptic weight $w$ and prove that a unique invariant distribution
exists for the associated dynamics.
This is the purpose of Proposition~\ref{EFPP}.

\begin{proposition}[Equilibrium of Fast Processes]
\label{EFPP}
For ${\mathbf K}{=}(K_{ij},i,j{\in}\{1,2\}){\in}\{0,1\}^4$, $\gamma_{1},\gamma_{2}{>}0$, $B_{1},B_{2}{\in}\R$, and each $w{\ge}0$, the Markov process $(X^w(t),Z_1^w(t),Z_2^w(t))$ solution of~\eqref{def:pairstati} has a unique stationary distribution $\Pi_w^{\mathbf K}$ on $\R_+{\times}\R^2$.
\end{proposition}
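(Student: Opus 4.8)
The plan is to establish existence and uniqueness of the stationary distribution $\Pi_w^{\mathbf K}$ by exhibiting a suitable Lyapunov (Foster--Lyapunov) function for the Markov process $(X^w(t),Z_1^w(t),Z_2^w(t))$ and invoking the standard ergodic theory for piecewise-deterministic Markov processes. First I would note that the process is a PDMP driven by the two Poisson-type inputs ${\cal N}_\lambda$ and ${\cal N}_{\beta,X^w}$: between jumps the coordinates relax deterministically ($X^w$ toward $0$ at rate $1$, each $Z_i^w$ toward $0$ at rate $\gamma_i$), and at jump times the coordinates are updated by bounded affine maps (either by an additive shift of size $w$ or $B_i$, or by a contraction of the form $z\mapsto z-K_{ij}z$ possibly followed by an additive shift, depending on the entries of $\mathbf K$). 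In all four cases $\mathbf K^{\rm PA},\mathbf K^{\rm PNS},\mathbf K^{\rm PNR}$ (and in fact for any $\mathbf K\in\{0,1\}^4$) the jump maps are non-expansive and push the relevant coordinate toward a bounded region.

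Next I would verify the Lyapunov condition. Take $V(x,z_1,z_2)=x+|z_1|+|z_2|$ (or a smoothed version). Applying the extended generator of~\eqref{def:pairstati}, the drift part contributes $-x-\gamma_1|z_1|-\gamma_2|z_2|$, while the jump part contributes bounded terms: the $\mathcal{N}_\lambda$-jumps add at most $\lambda(w+|B_1|)$ to the $x$- and $z_1$-parts (and, for the PNS/PNR rules, can only decrease $|z_1|$ via the reset toward $B_1$), and the $\mathcal{N}_{\beta,X}$-jumps add at most $\beta(x)(|B_2|+\cdots)$; crucially $\beta(x)=\nu+\beta x$ is linear in $x$, so this term is dominated by the $-x$ drift once $x$ is large. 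Hence ${\cal A}V\le -cV+d$ for constants $c>0$, $d<\infty$ uniformly on $\R_+\times\R^2$, which gives tightness of the occupation measures and, via Krylov--Bogolyubov, existence of at least one invariant probability measure on $\R_+\times\R^2$. (A minor point: one must check $X^w\ge 0$ is preserved, which is immediate since $X^w$ starts nonnegative, decays toward $0$, and jumps up by $w\ge 0$.)

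For uniqueness, I would argue irreducibility and a minorization/coupling property. The cleanest route: show that from any starting point the process enters a fixed compact set $C$ (a level set of $V$) in finite expected time, and that on $C$ there is a uniform lower bound on the probability of a fixed event after one time unit — e.g. the event that ${\cal N}_\lambda$ has no jump on $[0,1]$, which occurs with probability $e^{-\lambda}>0$ and forces $X^w$ to relax into a small neighborhood of $0$, after which $\beta(X^w)$ is bounded and the $Z_i^w$ likewise concentrate near their deterministic targets. Combined with the absolute continuity coming from the Poisson jump times, this yields a Doeblin-type minorization on $C$, hence a unique invariant measure and geometric ergodicity. The cleanest presentation simply cites the general results of~\cite{robert_stochastic_2020_1}, of which this is a special case, and checks that the hypotheses (here: linear growth of $\beta$, boundedness of the jump amplitudes, strict positivity of the decay rates) are met; indeed the statement of the proposition already refers the reader there.

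The main obstacle I anticipate is the interaction between the $z_i$ and $x$ coordinates in the PNS and PNR rules, where a single jump simultaneously resets one $Z$-coordinate and shifts another: one must make sure the Lyapunov inequality still closes, i.e. that no coordinate's jump contribution can grow faster than the drift can dissipate. This is handled by the observation that the "reset" jumps ($z\mapsto z-K_{ij}z$, i.e. $z\mapsto 0$ when $K_{ij}=1$) only ever \emph{decrease} $|z|$, and the "shift" jumps add bounded constants $|B_i|$ or, for $X^w$, the constant $w$ — so the only genuinely unbounded-looking term is the rate $\beta(x)=\nu+\beta x$ of post-synaptic jumps, which is linear and therefore absorbed by the $-x$ term in $V$ for $x$ large; everything else is uniformly bounded. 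Once this is in place the argument is routine.
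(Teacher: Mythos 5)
Your overall strategy (a Foster--Lyapunov drift condition for existence, an irreducibility/minorization step for uniqueness) is sensible, but note that the paper proves Proposition~\ref{EFPP} by a one-line citation of Proposition~25 of \cite{robert_stochastic_2020}, so you are in effect reconstructing the argument of that reference --- and two steps of your reconstruction do not go through as written. First, the drift inequality with $V(x,z_1,z_2)=x+|z_1|+|z_2|$ does not close uniformly in the parameters. The extended generator gives
\[
\mathcal{A}V(x,z_1,z_2) \le -x-\gamma_1|z_1|-\gamma_2|z_2| + \lambda\left(w+|B_1|\right) + (\nu+\beta x)\,|B_2|,
\]
so the net coefficient of $x$ on the right-hand side is $-1+\beta|B_2|$, which is nonnegative whenever $\beta|B_2|\ge 1$. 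A jump rate that is linear in $x$ is \emph{not} automatically ``absorbed by the $-x$ drift'': each post-synaptic jump adds up to $|B_2|$ to $|z_2|$, and nothing in $V$ dissipates that influx faster than it arrives. The fix is standard --- take a weighted norm $V=x+|z_1|+a|z_2|$ with $a\beta|B_2|<1$, after which $\mathcal{A}V\le -cV+d$ does hold --- but the inequality as you assert it is false on a nonempty range of the parameters the proposition covers.

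Second, the uniqueness step via ``absolute continuity coming from the Poisson jump times'' and a Doeblin minorization on a compact set is genuinely problematic, because the process can be degenerate: three coordinates are driven by only two point processes, and for instance for $\mathbf{K}^{\rm PA}$ with $\gamma_1=1$ the quantity $wZ_1^w(t)-B_1X^w(t)$ obeys a pure exponential decay, so the invariant measure is carried by the hyperplane $\{wz_1=B_1x\}$ and no minorization by a fixed measure admitting a density on $\R_+\times\R^2$ can hold. A Doeblin condition with a singular reference measure would have to be tailored to each $\mathbf K$, $w$ and $\gamma_i$. The robust argument --- and the one behind the cited Proposition~25 --- is a pathwise coupling: drive two copies from different initial states by the same $\mathcal{N}_\lambda$ and a common Poisson random measure (thinned at rate $\beta(X)$) for the post-synaptic spikes; the deterministic flows are exponential contractions and every jump map $z\mapsto(1-K_{ij})z+B_i$ is non-expansive, so the copies merge, the mismatch in post-synaptic jump rates being controlled by $|X^{(1)}(t)-X^{(2)}(t)|\to 0$. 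I would either carry out that coupling explicitly or do what the paper does and simply invoke the reference after verifying its hypotheses.
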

\begin{proof}
See Proposition~25 of~\cite{robert_stochastic_2020}.
\end{proof}

\begin{theorem}[Averaging Principle]
\label{th:PairStoch}
There exists $S_0{\in}(0,{+}\infty]$ such that, when $\eps$ goes to $0$, 
the process $(W_{\eps}(t), t{<}S_0)$ is converging in distribution to $(w(t),t{<}S_0)$,  solution of the equation
\[
\frac{\diff w}{\diff t}(t) = \E_{\Pi^{\mathbf K}_{w(t)}}\left[\lambda Z_{2} {+} \beta(X) Z_{1} \right],
\]
where $\Pi^{K}_{w}$ is defined in Proposition~\ref{EFPP}.
\end{theorem}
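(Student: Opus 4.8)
The plan is to realize the averaging principle as an instance of the classical stochastic averaging framework for Markov processes with two timescales (in the spirit of Kurtz, and of the results in~\cite{robert_stochastic_2020_1,robert_stochastic_2020}), treating $(X_\eps,Z_{1,\eps},Z_{2,\eps})$ as the fast variable and $(W_\eps)$ as the slow variable. The first step is to fix a level $w$ and recall, via Proposition~\ref{EFPP}, that the frozen fast dynamics~\eqref{def:pairstati} is an ergodic Markov process with unique invariant law $\Pi_w^{\mathbf K}$; one also needs some regularity of $w\mapsto\Pi_w^{\mathbf K}$ (continuity in the weak topology, plus local uniformity of the ergodic averages), which follows from the explicit construction of these shot-noise-type processes and a coupling argument. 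The second step is to identify the occupation measure of the fast process: one introduces the random measure
\[
\Lambda_\eps(\diff s,\diff x,\diff z)=\diff s\,\delta_{(X_\eps(s),Z_{1,\eps}(s),Z_{2,\eps}(s))}(\diff x,\diff z),
\]
and shows tightness of the family $(W_\eps,\Lambda_\eps)$ on $D([0,S_0),\R_+)\times\mathcal M$, where $S_0$ is the explosion time of the limiting ODE (finite when $(w(t))$ reaches $0$ or escapes to $+\infty$). Tightness of $(W_\eps)$ comes from the $O(\eps)$ scaling of its jumps together with moment bounds on $Z_{1,\eps},Z_{2,\eps}$ on compact $w$-intervals; tightness of $\Lambda_\eps$ is automatic once the time-marginal is Lebesgue.

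The third step is the identification of limit points. Writing the semimartingale decomposition of $(W_\eps)$,
\[
W_\eps(t)=W_\eps(0)+\int_0^t\bigl(\lambda Z_{2,\eps}(s)+\beta(X_\eps(s))Z_{1,\eps}(s)\bigr)\diff s+M_\eps(t),
\]
where $M_\eps$ is a martingale whose bracket is $O(\eps)$ and hence vanishes, one passes to the limit along a convergent subsequence: the drift integral is a continuous functional of $\Lambda_\eps$, so any limit point $(w,\Lambda)$ satisfies $w(t)=w(0)+\int\!\!\int (\lambda z_2+\beta(x)z_1)\,\Lambda(\diff s,\diff x,\diff z)$. One then shows that $\Lambda$ disintegrates as $\Lambda(\diff s,\diff x,\diff z)=\diff s\,\pi_s(\diff x,\diff z)$ with $\pi_s=\Pi_{w(s)}^{\mathbf K}$ for a.e.\ $s$: this is done by testing against the generator $\mathcal L_w$ of the frozen fast process and using that, because the fast clock runs at speed $1/\eps$, the term $\frac1\eps\int_0^t \mathcal L_{W_\eps(s)}g(X_\eps,Z_\eps)\,\diff s$ must stay bounded, forcing $\int\!\mathcal L_{w(s)}g\,\diff\pi_s=0$, i.e.\ $\pi_s$ is invariant for $\mathcal L_{w(s)}$; uniqueness (Proposition~\ref{EFPP}) pins it down. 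Substituting gives exactly the claimed ODE, and since its right-hand side $f^{\mathbf K}$ is locally Lipschitz in $w$ (it is, from the explicit formulas in the Appendices, analytic on the relevant range), the solution is unique up to its explosion time $S_0$, so the whole family converges, not merely a subsequence.

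The main obstacle, I expect, is the ergodicity-and-regularity input at the level of generality "${\mathbf K}\in\{0,1\}^4$, $B_i\in\R$, $\beta$ possibly of the capped form $(\nu-\beta x)^+$": one must control the fast process uniformly in $w$ over compacts, including the non-smooth inhibitory activation and the degenerate cases where $\beta(X^w)$ can vanish (so the post-synaptic clock may stall), and one must justify interchange of the $\eps\to0$ limit with unbounded test functions $z_1,z_2$ via uniform integrability. This is precisely where one leans on the detailed estimates of~\cite{robert_stochastic_2020_1,robert_stochastic_2020}; the rest of the argument is the standard tightness/identification/uniqueness loop. The restriction to $t<S_0$ is what lets us avoid discussing the behavior of $(W_\eps)$ after it hits the boundary $\{0\}$ or diverges, where the fast dynamics may itself degenerate.
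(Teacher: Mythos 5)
The paper does not give a self-contained proof of Theorem~\ref{th:PairStoch}: it defers entirely to \cite{robert_stochastic_2020_2} and \cite{robert_stochastic_2020}, and the argument developed there is precisely the tightness/occupation-measure/identification scheme you outline (martingale decomposition of the slow variable with an $O(\eps)$ bracket, disintegration of the limiting occupation measure forced to be invariant for the frozen generator, uniqueness from Proposition~\ref{EFPP}, and local Lipschitz continuity of $f^{\mathbf K}$ up to the explosion time $S_0$). Your sketch is correct and follows essentially the same route, with the genuinely delicate points --- uniform integrability of the unbounded functional $\lambda z_2+\beta(x)z_1$ and control of the fast process uniformly over compact $w$-ranges, including the capped inhibitory activation --- correctly identified as the places where the cited estimates are actually needed.
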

\begin{proof}
 See~\cite{robert_stochastic_2020_2} and~\cite{robert_stochastic_2020}.
\end{proof}

\section{Comparison to classical computational models}
\label{app:comp}
In this section, we compare averaging principles for STDP rules leading to Relation~\eqref{AsymW} with the results of~\cite{kempter_hebbian_1999}
in the all-to-all pair-based scheme.

The asymptotic behavior of the synaptic weight dynamics, Relation~(4) of~\cite{kempter_hebbian_1999}, is a consequence of
a similar slow-fast argument,
\begin{equation}\label{eqKemp}
 \frac{\diff \widetilde{w}}{\diff t}(t)=\int_{-\infty}^{+\infty}\widetilde{\Phi}(s)\widetilde{\mu}(s,t) \diff s,
\end{equation}
where,
\begin{itemize}
\item $\widetilde{\Phi}(s)$ represents the STDP curve;
\item $\widetilde{\mu}(s,t){=}\overline{{<} S^{1}(t{+}s)S^{2}(t){>}}$, the correlation between the spike trains.
\end{itemize}

The quantity $\overline{\left<{\cdots}\right>}$ is defined in terms of {\em temporal and ensemble averages},  ${<}{\cdots}{>}$ is the {ensemble average} and $\overline{\cdots}$ the
{temporal average} over the spike trains.

In our setting, Theorem~\ref{th:PairStoch} gives the following equation,
\[
\frac{\diff w}{\diff t}(t) = \E_{\Pi^{\text{\rm PA}}_{w(t)}}\left[\lambda Z_{2} {+} \beta(X) Z_{1} \right],
\]
with
\[
    \Phi(t)\steq{def}B_{1}\exp(-\gamma_{1}t)\ind{t>0} + B_{2}\exp(\gamma_{2}t)\ind{t<0}.
 \]

We have, using simple calculus,
\[
\lambda \E_{\Pi^{\text{\rm PA}}_{w(t)}}\left[Z_{2}\right]
= \int_{-\infty}^0 B_2 \exp(\gamma_2 \tau)\lambda \E_{\Pi^{\text{\rm PA}}_{w(t)}}\left[\beta(x)\right]\diff \tau
\]
We denote by $\Pi^{\text{\rm PA}}_{2{\mapsto}1, t}(\tau)$ the probability of having a post-pre pairing with delay $\tau$ at time $t$.
For the post-pre pairing, we can consider that $\Pi^{\text{\rm PA}}_{2{\mapsto}1, t}(\tau)$ does not depend on $\tau$
and that it is just equal to the product of both rates, i.e there is no
causality, and
\[
    \Pi^{\text{\rm PA}}_{2{\mapsto}1, t}(\tau) = \lambda \E_{\Pi^{\text{\rm PA}}_{w(t)}}\left[\beta(x)\right].
\]
We  easily conclude that,
\[
\lambda \E_{\Pi^{\text{\rm PA}}_{w(t)}}\left[Z_{2}\right]
= \int_{-\infty}^0 \Phi(\tau) \Pi^{\text{\rm PA}}_{2{\mapsto}1, t}(\tau) \diff \tau,
\]
with $\Pi^{\text{\rm PA}}_{2{\mapsto}1, t}(\tau) \approx \overline{{<} S^{1}(t{+}\tau)S^{2}(t){>}}$.

Similarly, we have
\[
    \E_{\Pi^{\text{\rm PA}}_{w(t)}}\left[\beta(X) Z_{1}\right] =
        \E_{\Pi^{\text{\rm PA}}_{w(t)}}\left[\sum_{t_{\text{\rm pre}}{<}t_{\text{\rm post}}}
         B_1 \exp(-\gamma_1 (t_{\text{\rm post}} - t_{\text{\rm pre}})) \beta(X)\right]
\]
We denote by $\Pi^{\text{\rm PA}}_{1{\mapsto}2, t}(\tau)$ the probability of having a pre-post pairing with delay $\tau$ at time $t$.
For the pre-post pairing, this quantity depends on $\tau$ because spikes of the pre-synaptic neuron
influence the spiking of the post-synaptic one, so we have, by using the fact that $\Pi^{\text{\rm PA}}$ is the
invariant distribution,
\[
    \E_{\Pi^{\text{\rm PA}}_{w(t)}}\left[\sum_{t_{\text{\rm pre}}{<}t_{\text{\rm post}}}
         B_1 \exp(-\gamma_1 (t_{\text{\rm post}} - t_{\text{\rm pre}})) \beta(X)\right] =\\
         \int_{0}^{+\infty} B_1 \exp(-\gamma_1 \tau) \Pi^{\text{\rm PA}}_{1{\mapsto}2, t}(\tau) \diff \tau.
\]
See SM2 of~\cite{robert_stochastic_2020_2}, hence
\[
 \E_{\Pi^{\text{\rm PA}}_{w(t)}}\left[\beta(X) Z_{1}\right]
= \int_{0}^{+\infty} \Phi(\tau) \Pi^{\text{\rm PA}}_{1{\mapsto}2, t}(\tau) \diff \tau,
\]
with    $\Pi^{\text{\rm PA}}_{1{\mapsto}2, t}(\tau) \approx \overline{{<} S^{1}(t)S^{2}(t{+}\tau){>}}$.
This shows the equivalence between~\cite{kempter_hebbian_1999} and our result for the all-to-all pair-based STDP rules.

\section{Proofs}

\subsection{All-to-all STDP at an excitatory synapse}\ \\
\label{app:proofPAEXC}

We prove that,
\[
    \E_{\Pi^{\textup{PA}}_{w(t)}}\left[\lambda Z_{2} {+} \beta(X) Z_{1} \right] = A^{\text{\rm PA}}_0 {+} A^{\text{\rm PA}}_{1}w = A^{\text{\rm PA}}_1\left(w{-}w^{\text{\rm PA}}\right).
\]
where,
\[
    A^{\text{\rm PA}}_{0} = \nu\lambda\left(\frac{B_1}{\gamma_1}{+}\frac{B_2}{\gamma_2}\right),\,
    A^{\text{\rm PA}}_{1} = \beta\lambda^2\left(\frac{B_1}{\gamma_1}{+}\frac{B_2}{\gamma_2}{+}\frac{B_1}{\lambda(1{+}\gamma_1)}\right)
\]

\begin{proof}

First, it is easy to show that,
\[
    \E\left[Z^{\textup{PA},w}_1\right] = \lambda \frac{B_1}{\gamma_1}, \text{ and } \E\left[Z^{\textup{PA},w}_2\right]
    = \nu \frac{B_2}{\gamma_2}{+}\beta \lambda \frac{B_2}{\gamma_2} w
\]

Moreover, denoting $(Y^{w}(t)){=}(X^w(t)Z^{\textup{PA},w}_1(t))$, we get
\[
    \diff Y^{w}(t) = {-}(1{+}\gamma_1) Y^w(t)\diff t
        +\left(\rule{0mm}{4mm}wZ^{\textup{PA},w}_1(t{-}){+}B_1 X^w(t{-})+wB_1\right){\cal N}_\lambda(\diff t),
\]
by integrating this ODE on $[0,t]$ and taking the expected value, we obtain

\[
 \E\left[X^wZ^{\textup{PA},w}_1\right]=\frac{
 \lambda w \E\left[Z^{\textup{PA},w}_1\right]{+} \lambda B_1\E\left[X^w\right]{+}\lambda w B_1}{1{+}\gamma_1}
 = \left(\frac{\lambda^2}{\gamma_1}{+}\frac{\lambda}{1{+}\gamma_1}\right) B_1 w .
\]

\end{proof}

\subsection{Nearest neighbor symmetric STDP at an excitatory synapse}
\label{app:proofPNSEXC}

\subsubsection{Estimation of $f^{\text{\rm PNS}}$}
\label{app:proofPNSEXC_f}
~~\\

\[
    \E_{\Pi^{\textup{PNS}}_{w(t)}}\left[\lambda Z_{2} {+} \beta(X) Z_{1} \right] = A^{\text{\rm PNS}}_0{+}A^{\text{\rm PNS}}_1w{+}A^{\text{\rm PNS}}_2 h^{\text{\rm PNS}}(w)
\]
with,
\[
    A^{\text{\rm PNS}}_0{=} \frac{\nu\lambda}{\lambda{+}\gamma_1} B_1{+}\frac{\nu\lambda}{\nu{+}\gamma_2} B_2,\,
    A^{\text{\rm PNS}}_1{=} \lambda \beta \frac{1{+}\lambda}{1{+}\lambda{+}\gamma_1} B_1,\ A^{\text{\rm PNS}}_2{=}\lambda B_2,
\]
and,

\begin{multline*}
    h^{\text{\rm PNS}}(w)=\gamma_2 \int_{\R_+}e^{-\gamma_2\tau}\left(1- \exp\left(\rule{0mm}{4mm}{-}\nu \tau
    {-}\lambda \hspace{-1mm}\int_{0}^{\tau} \hspace{-2mm} \left(1{-}\exp\left({-}\beta w \left(1{-}e^{s-\tau}\right)\right)\right)\diff s\right.\right.\\
    \displaystyle\hspace{3cm}\left.\left.\rule{0mm}{4mm}{-}\lambda \int_{-\infty}^{0} \hspace{-2mm} \left(1{-}\exp\left({-}\beta w \left(1{-}e^{-\tau}\right)e^s\right)\right)\diff s\right)\right) \diff \tau{-}\frac{\nu}{\nu{+}\gamma_2}.
\end{multline*}

\begin{proof}
For $w{\ge}0$, we have,
\begin{multline*}
    f^{\text{\rm PNS}}(w) = \nu B_2 \int_{\R_+} \lambda e^{-(\lambda + \gamma_2) \tau}\diff \tau
   {+}\lambda \beta w B_1 \int_{\R_+} (1{+}\lambda)e^{-(1{+}\lambda{+}\gamma_1) \tau}\diff \tau\\
   {-}\lambda \gamma_2 \int_{\R_+}\exp(-\gamma_2 \tau)\left(1- \exp\left(\rule{0mm}{4mm}{-}\nu \tau
    {-}\lambda \hspace{-1mm}\int_{0}^{\tau} \hspace{-2mm} \left(1{-}\exp\left(-\beta w \left(1{-}e^{s-\tau}\right)\right)\right)\diff s\right.\right.\\
    \displaystyle\hspace{1cm}\left.\left.\rule{0mm}{4mm}{-}\lambda \int_{-\infty}^{0} \hspace{-2mm} \left(1{-}\exp\left(-\beta w \left(1{-}e^{-\tau}\right)e^s\right)\right)\diff s\right)\right) \diff \tau.
\end{multline*}

Stochastic calculus gives, for $\xi{\ge}0$,
\begin{multline*}
  {-}\ln \E\left[e^{{-}\xi{\cal N}_{\beta,X^w_\infty}((0,a))}\right]
  {=}\nu a\left(\!1{-}e^{-\xi}\!\right)
{+}\lambda\!\! \int_{0}^{a} \hspace{-2mm} \left(1{-}\exp\left(-\beta w \left(1{-}e^{{-}\xi}\right)\!\!\left(1{-}e^{s-a}\right)\right)\right)\!\diff s\\
{+}\lambda \int_{-\infty}^{0} \hspace{-2mm} \left(1{-}\exp\left(-\beta w \left(1{-}e^{-\xi}\right)\left(1{-}e^{-a}\right)e^s\right)\right)\diff s.
\end{multline*}
By letting $\xi$ go to infinity, we have obtained the desired expression. The proposition is proved.

\end{proof}

\subsubsection{Dynamics of $w$}\ \\
\label{app:proofPNSEXC_dynamics}

We start with some calculations for $h^{\text{\rm PNS}}$ of Section~\ref{secsec:excPN}.
\begin{lemma}
$h^{\text{\rm PNS}}$ is a convex function, and,
\[
     h^{\text{\rm PNS}}(0) = 0,\, h^{\text{\rm PNS}}({+}\infty) = 1{-}\frac{\nu}{\nu{+}\gamma_2},\,
     {h^{\text{\rm PNS}}}^{\prime}(0)=\frac{\lambda\beta\gamma_2}{(\nu{+}\gamma_2)^2}
\]
\end{lemma}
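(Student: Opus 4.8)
I want to establish four facts about
\[
h^{\text{\rm PNS}}(w)=\gamma_2 \int_{\R_+}e^{-\gamma_2\tau}\,G(w,\tau)\,\diff\tau-\frac{\nu}{\nu+\gamma_2},
\]
where $G(w,\tau)=1-\exp(-\Psi(w,\tau))$ and
\[
\Psi(w,\tau)=\nu\tau+\lambda\int_0^\tau\!\bigl(1-e^{-\beta w(1-e^{s-\tau})}\bigr)\diff s+\lambda\int_{-\infty}^0\!\bigl(1-e^{-\beta w(1-e^{-\tau})e^s}\bigr)\diff s .
\]
The starting point is the probabilistic reading coming from Section~\ref{app:proofPNSEXC_f}: $G(w,\tau)=\P\bigl(\mathcal N_{\beta,X^w_\infty}((0,\tau))\neq\emptyset\bigr)$, i.e. $G(w,\tau)$ is the probability that the fast (stationary) post-synaptic process has at least one spike in a window of length $\tau$, so $G(w,\tau)=1-\exp(-\Psi(w,\tau))$ with $\Psi\ge 0$; consequently $\gamma_2\int_{\R_+}e^{-\gamma_2\tau}G(w,\tau)\diff\tau=\P(\mathcal N_{\beta,X^w_\infty}((0,E_{\gamma_2}))\neq\emptyset)$ for an independent $\mathrm{Exp}(\gamma_2)$ variable $E_{\gamma_2}$. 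This makes the boundary values transparent.

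\textbf{Step 1 (boundary values).} At $w=0$ the membrane potential is identically $0$, so $\mathcal N_{\beta,X^0_\infty}$ is Poisson of rate $\nu$ and $G(0,\tau)=1-e^{-\nu\tau}$; then $\gamma_2\int_{\R_+}e^{-\gamma_2\tau}(1-e^{-\nu\tau})\diff\tau=1-\frac{\gamma_2}{\nu+\gamma_2}=\frac{\nu}{\nu+\gamma_2}$, giving $h^{\text{\rm PNS}}(0)=0$. Alternatively one checks directly that $\Psi(0,\tau)=\nu\tau$ (both integrands vanish at $w=0$). As $w\to+\infty$, for each fixed $\tau>0$ the inner integrands $1-e^{-\beta w(1-e^{s-\tau})}\to 1$ on $(0,\tau)$, so $\Psi(w,\tau)\to+\infty$ and $G(w,\tau)\to 1$ (monotonically, since $\Psi$ is increasing in $w$ — see Step 3); dominated convergence then yields $\gamma_2\int_{\R_+}e^{-\gamma_2\tau}G(w,\tau)\diff\tau\to 1$, hence $h^{\text{\rm PNS}}(+\infty)=1-\frac{\nu}{\nu+\gamma_2}$.

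\textbf{Step 2 (derivative at $0$).} Differentiate under the integral sign: $\partial_w G(w,\tau)=e^{-\Psi(w,\tau)}\partial_w\Psi(w,\tau)$, and at $w=0$, $e^{-\Psi(0,\tau)}=e^{-\nu\tau}$ while $\partial_w\Psi(0,\tau)=\lambda\beta\int_0^\tau(1-e^{s-\tau})\diff s+\lambda\beta(1-e^{-\tau})\int_{-\infty}^0 e^s\diff s=\lambda\beta\bigl[(\tau-1+e^{-\tau})+(1-e^{-\tau})\bigr]=\lambda\beta\tau$, using $\frac{\diff}{\diff w}(1-e^{-\beta w a})\big|_{w=0}=\beta a$. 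Therefore
\[
{h^{\text{\rm PNS}}}'(0)=\gamma_2\int_{\R_+}e^{-\gamma_2\tau}\,e^{-\nu\tau}\,\lambda\beta\tau\,\diff\tau=\lambda\beta\gamma_2\int_{\R_+}\tau e^{-(\nu+\gamma_2)\tau}\diff\tau=\frac{\lambda\beta\gamma_2}{(\nu+\gamma_2)^2}.
\]
One must justify the interchange of $\diff/\diff w$ and $\int\diff\tau$ near $w=0$, e.g. by a uniform-in-$w$ domination of $\partial_w G$ on a neighbourhood of $0$ (the factor $e^{-\Psi}$ is bounded by $1$ and $\partial_w\Psi(w,\tau)\le \lambda\beta\tau+(\text{bounded})$ uniformly for $w$ in a compact set).

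\textbf{Step 3 (convexity) — the main obstacle.} I need $w\mapsto h^{\text{\rm PNS}}(w)$ convex, i.e. $w\mapsto G(w,\tau)=1-e^{-\Psi(w,\tau)}$ convex for a.e.\ $\tau$ (then integrate against the nonnegative measure $\gamma_2 e^{-\gamma_2\tau}\diff\tau$). Write $G=1-e^{-\Psi}$; then $\partial_w^2 G=e^{-\Psi}\bigl(\partial_w^2\Psi-(\partial_w\Psi)^2\bigr)$, so pointwise convexity is \emph{not} automatic and this is where the real work lies. The clean route is again probabilistic: $G(w,\tau)=1-\exp(-\int_{\R}\phi_\tau(s)\,g(w,s)\,\diff s - \nu\tau)$ where, after the change of variables, $g(w,s)=1-e^{-\beta w\,\rho_\tau(s)}$ for suitable $\rho_\tau(s)\ge 0$ (namely $\rho_\tau(s)=1-e^{s-\tau}$ on $(0,\tau)$ and $\rho_\tau(s)=(1-e^{-\tau})e^s$ on $(-\infty,0)$). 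For each fixed $s$, $w\mapsto g(w,s)$ is concave and increasing with $g(0,s)=0$; hence $\Psi(\cdot,\tau)$ is concave, increasing, nonnegative, with $\Psi(0,\tau)=\nu\tau$. The composition fact I will invoke is: if $\Psi$ is concave and increasing, then $1-e^{-\Psi}$ is... not convex in general — so instead I will argue directly from the $X^w$ dynamics that $G(w,\tau)=\P(N^w(\tau)=0)^c$ where $N^w(\tau)$ counts post-spikes and use that $\P(N^w(\tau)=0)=\E[\exp(-\int_0^\tau\beta X^w_\infty(s)\diff s - \nu\tau\,)]$ is \emph{log-convex} in $w$ because $\int_0^\tau X^w_\infty(s)\diff s$ is linear in $w$ (the potential $X^w$ is linear in $w$ for fixed driving Poisson process $\mathcal N_\lambda$!), so $w\mapsto\P(N^w(\tau)=0)$ is convex as an average of convex (indeed log-linear) functions, whence $w\mapsto G(w,\tau)=1-\P(N^w(\tau)=0)$ is concave — contradicting what we want. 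The resolution is that the relevant $X^w$ appearing in $\Psi$ is stationary and the correct monotonicity is in the \emph{opposite} direction; I will instead verify convexity by showing $h^{\text{\rm PNS}}$ equals $\frac{\gamma_2}{\nu+\gamma_2}\bigl(1 - \E[e^{-\beta w\,R}]\bigr)$ for an explicit nonnegative random variable $R=R(\tau,E_{\gamma_2},\text{spikes})$, and $w\mapsto 1-\E[e^{-\beta w R}]$ with $R\ge0$ is automatically concave, not convex — meaning I should recheck the sign convention. Given that the lemma \emph{asserts} convexity, the honest plan is: reduce to showing $w\mapsto\gamma_2\int e^{-\gamma_2\tau}e^{-\Psi(w,\tau)}\diff\tau$ is convex, expand $e^{-\Psi}$, and prove $\partial_w^2\bigl(\gamma_2\int e^{-\gamma_2\tau}e^{-\Psi}\diff\tau\bigr)\ge0$ by the Cauchy–Schwarz / FKG-type inequality $\E[e^{-\Psi}(\partial_w\Psi)^2]\le \E[e^{-\Psi}]^{-1}\E[e^{-\Psi}]\cdot(\ldots)$ after the $\tau$-averaging — i.e.\ convexity is recovered only after integrating in $\tau$, using positive association of $\tau\mapsto\Psi(w,\tau)$ and $\tau\mapsto\partial_w\Psi(w,\tau)$ (both increasing in $\tau$) against the measure $e^{-\Psi}\gamma_2 e^{-\gamma_2\tau}\diff\tau$. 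That Chebyshev/FKG correlation step for the measure twisted by $e^{-\Psi}$ is the crux, and I expect it to be the hardest part of the argument; the boundary values and the first derivative in Steps 1–2 are routine calculus once differentiation under the integral is justified.
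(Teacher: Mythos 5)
Your Steps 1 and 2 are correct and essentially match the paper's own computation: the paper differentiates under the integral sign to get an explicit formula for ${h^{\text{\rm PNS}}}^{\prime}(w)$ and then reads off $h^{\text{\rm PNS}}(0)$, $h^{\text{\rm PNS}}({+}\infty)$ and ${h^{\text{\rm PNS}}}^{\prime}(0)$ exactly as you do; your probabilistic reading of $G(w,\tau)$ as a non-void probability is a pleasant shortcut for the boundary values but amounts to the same calculation.

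The gap is Step 3, and the honest conclusion is the one you kept circling and then abandoned: the convexity claim is false, and you had already proved it twice. First, a nonconstant convex function on $[0,{+}\infty)$ that is bounded above must be nonincreasing (its derivative is nondecreasing, so if it is ever positive the function diverges); but $h^{\text{\rm PNS}}(0)=0$, $h^{\text{\rm PNS}}({+}\infty)=\gamma_2/(\nu{+}\gamma_2)<\infty$ and ${h^{\text{\rm PNS}}}^{\prime}(0)>0$, so $h^{\text{\rm PNS}}$ cannot be convex. Second, for fixed $\tau$ the exponent $\Psi(\cdot,\tau)$ is an affine term plus a nonnegative mixture of the concave increasing functions $w\mapsto 1{-}e^{-\beta w\rho}$, $\rho\ge 0$, hence concave; therefore $e^{-\Psi}$ is convex, since $\partial_w^2 e^{-\Psi}=e^{-\Psi}\bigl((\partial_w\Psi)^2{-}\partial_w^2\Psi\bigr)\ge 0$, and $h^{\text{\rm PNS}}(w)=1{-}\nu/(\nu{+}\gamma_2){-}\gamma_2\int_{\R_+} e^{-\gamma_2\tau}e^{-\Psi(w,\tau)}\diff\tau$ is \emph{concave}. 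Equivalently, every factor of the paper's displayed integrand for ${h^{\text{\rm PNS}}}^{\prime}(w)$ (terms $\rho e^{-\beta w\rho}$ and $\exp(-\lambda\int(1{-}e^{-\beta w\rho}))$ with $\rho\ge0$) is nonnegative and nonincreasing in $w$, so ${h^{\text{\rm PNS}}}^{\prime}$ is nonincreasing. The paper's entire justification of convexity is the sentence ``$h'(w)$ is an increasing function in $w$,'' which is wrong; the lemma should assert concavity. Your proposed FKG/Chebyshev rescue cannot work: it would establish convexity of $w\mapsto\gamma_2\int e^{-\gamma_2\tau}e^{-\Psi}\diff\tau$, which is a constant minus $h^{\text{\rm PNS}}$, i.e.\ concavity of $h^{\text{\rm PNS}}$ yet again. (Note the downstream consequence: the appendix invokes ``convexity of $f^{\text{\rm PNS}}$'' with $f^{\text{\rm PNS}}=A^{\text{\rm PNS}}_0{+}A^{\text{\rm PNS}}_1 w{+}\lambda B_2\,h^{\text{\rm PNS}}(w)$; the correct statement is that $f^{\text{\rm PNS}}$ is concave when $B_2>0$ and convex when $B_2<0$, which still bounds the number of positive roots by two, but the case analysis there should be rechecked with the correct curvature.)
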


\begin{proof}
We compute,

\begin{multline*}
    {h^{\text{\rm PNS}}}^{\prime}(w)=\lambda\beta\gamma_2\int_{\R_+}e^{-(\nu+\gamma_2)\tau}\left(\hspace{-1mm}\int_{0}^{\tau} \hspace{-2mm}\left(1{-}e^{s-\tau}\right)\exp\left(-\beta w \left(1{-}e^{s-\tau}\right)\right)\diff s\right.\\
    \left.{+}\int_{-\infty}^{0} \hspace{-2mm}\left(1{-}e^{-\tau}\right)e^s \exp\left(-\beta w \left(1{-}e^{-\tau}\right)e^s\right)\diff s \right)\\
    \exp\left(\rule{0mm}{4mm}{-}\lambda \hspace{-1mm}\int_{0}^{\tau} \hspace{-2mm} \left(1{-}\exp\left(-\beta w \left(1{-}e^{s-\tau}\right)\right)\right)\diff s
    {-}\lambda \int_{-\infty}^{0} \hspace{-2mm} \left(1{-}\exp\left(-\beta w \left(1{-}e^{-\tau}\right)e^s\right)\right)\diff s\right) \diff \tau.
\end{multline*}

We have $h'(w)$ is an increasing function in $w$, so $h(w)$ is convex.
\end{proof}
The system

\[
    \frac{\diff w}{\diff t}(t) = A^{\text{\rm PNS}}_0{+}A^{\text{\rm PNS}}_1w{+}A^{\text{\rm PNS}}_2 h^{\text{\rm PNS}}(w)
\]
has the following dynamics.

\begin{table*}[h]
\renewcommand*{\arraystretch}{2}
\centering
\begin{tabular}{ ccccc }
 $\nu$ & LTD & LTP & STABLE FP & UNSTABLE FP \\  \hline
 $0$ & $A^{\text{\rm PNS}}_3{<}0$&  $A^{\text{\rm PNS}}_3{>}0$ & $A^{\text{\rm PNS}}_3{<}0$ & $A^{\text{\rm PNS}}_3{>}0$\\
  & $A^{\text{\rm PNS}}_1{<}0$ &  $A^{\text{\rm PNS}}_1{>}0$ & $A^{\text{\rm PNS}}_1{>}0$ & $A^{\text{\rm PNS}}_1{<}0$ \\ \hline
 $>0$ & $A^{\text{\rm PNS}}_0{<}0$&  $A^{\text{\rm PNS}}_0{>}0$ & $A^{\text{\rm PNS}}_0{<}0$ & $A^{\text{\rm PNS}}_0{>}0$\\
  &$A^{\text{\rm PNS}}_1{<}0$ & $A^{\text{\rm PNS}}_1{>}0$ & $A^{\text{\rm PNS}}_1{>}0$ & $A^{\text{\rm PNS}}_1{<}0$ \\
\end{tabular}
\caption{Bifurcations parameters for the nearest neighbor symmetric scheme}
\label{table:compPNS}
\end{table*}

where
\[
    A^{\text{\rm PNS}}_3 = \lambda \beta\left(\frac{1{+}\lambda}{1{+}\lambda{+}\gamma_1} B_1{+} \frac{\lambda}{\gamma_2} B_2\right) = {f^{\text{\rm PNS}}}^{\prime}(0)_{\nu{=}0}.
\]

\begin{proof}
~~\\
{\it Case $\nu=0$}

We have $f^{\text{\rm PNS}}(0)=0$ and
$\lim_{w\rightarrow+\infty}f^{\text{\rm PNS}}(w)=\text{sign}(A^{\text{\rm PNS}}_1){\times}\infty$.
We need to look then at the sign of ${f^\text{\rm PNS}}^{\prime}(0)=A^\text{\rm PNS}_3$.

If $A^{\text{\rm PNS}}_1$ and $A^{\text{\rm PNS}}_3$ are of the same sign, $f^\text{\rm PNS}_1$ has no positive roots.
Therefore, if $A^{\text{\rm PNS}}_1{>}0$ and $A^{\text{\rm PNS}}_3{>}0$, we have $\lim_{t\rightarrow+\infty}w(t)=+\infty$.
Reciprocally, if $A^{\text{\rm PNS}}_1{>}0$ and $A^{\text{\rm PNS}}_3{<}0$, we have $\lim_{t\rightarrow+\infty}w(t)=0$.

If $A^{\text{\rm PNS}}_1$ and $A^{\text{\rm PNS}}_3$ are not of the same sign, $f^\text{\rm PNS}_1$ has a unique
positive root $w^{\text{\rm PNS}}$.
Then, if $A^{\text{\rm PNS}}_1{<}0$ and $A^{\text{\rm PNS}}_3{>}0$, $w^{\text{\rm PNS}}$ is a stable fixed point and
$A^{\text{\rm PNS}}_1{>}0$ and $A^{\text{\rm PNS}}_3{<}0$, it is an unstable fixed point.

~~\\
{\it Case $\nu>0$}

We have $f^{\text{\rm PNS}}(0)=A^{\text{\rm PNS}}_0$ and
$\lim_{w\rightarrow+\infty}f^{\text{\rm PNS}}(w)=\text{sign}(A^{\text{\rm PNS}}_1){\times}\infty$

Similarly as for $\nu{=}0$, if $A^{\text{\rm PNS}}_0$ and $A^{\text{\rm PNS}}_1$ are not of the same sign, $f^\text{\rm PNS}_1$ has a unique
positive root $w^{\text{\rm PNS}}$, following the convexity of $f^{\text{\rm PNS}}$.
Then, if $A^{\text{\rm PNS}}_0{>}0$ and $A^{\text{\rm PNS}}_1{<}0$, $w^{\text{\rm PNS}}$ is a stable fixed point and
$A^{\text{\rm PNS}}_0{<}0$ and $A^{\text{\rm PNS}}_1{>}0$, it is an unstable fixed point.

It is slightly more complex for the other cases.
We will focus on the case, $A^{\text{\rm PNS}}_0{>}0$ and $A^{\text{\rm PNS}}_1{>}0$.
We have that $f^{\text{\rm PNS}}(0)>0$ and that $\lim_{w\rightarrow+\infty}f^{\text{\rm PNS}}(w)=+\infty$.
As $f^{\text{\rm PNS}}$ is convex, two cases are possible. Either $f^{\text{\rm PNS}}$ has no positive root, and in
that case, it is easy to see that $\lim_{t\rightarrow+\infty}w(t)=+\infty$.
However, it is also possible that $f^{\text{\rm PNS}}$ has two positive roots and in that case it would lead to more
complex dynamics. we just need to look at ${f^{\text{\rm PNS}}}^{\prime}(0)$ and show that it is positive to prove that
this case does not happen.

$A^{\text{\rm PNS}}_0{>}0$ leads to a first inequality,
\[
     B_1 \geq{-}B_2\frac{\lambda{+}\gamma_1}{\nu{+}\gamma_2} \geq 0.
\]

We can then say that,
\begin{align*}
{f^{\text{\rm PNS}}}^{\prime}(0) &= A^{\text{\rm PNS}}_1{+}A^{\text{\rm PNS}}_2\frac{\lambda\beta\gamma_2}{(\nu{+}\gamma_2)^2} = B_1 \lambda \beta \frac{1{+}\lambda}{1{+}\lambda{+}\gamma_1}{+}B_2\frac{\lambda^2\beta\gamma_2}{(\nu{+}\gamma_2)^2}\\
      & \geq{-}B_2\lambda \beta\frac{(\lambda{+}\gamma_1)\frac{1{+}\lambda}{1{+}\lambda{+}\gamma_1}{-}\frac{\lambda\gamma_2}{\nu{+}\gamma_2}}{\nu{+}\gamma_2} = {-}B_2\lambda \beta\frac{\lambda\nu{+}\lambda^2\nu{+}\gamma_1\nu{+}\gamma_1\gamma_2+\lambda\nu\gamma_1}{(\nu{+}\gamma_2)^2(1{+}\lambda{+}\gamma_1)} \geq 0\\
\end{align*}

The same arguments are true for the other case.
\end{proof}

\subsubsection{Approximation for $w$ small}
\label{app:proofPNSEXC_approx}
~~\\

We have the following expansion for $w$ small,
\[
    f^{\text{\rm PNS}}(w) =
    \nu B_1\frac{\lambda}{\lambda{+}\gamma_1}{+}\lambda \beta w B_1\frac{1{+}\lambda}{1{+}\lambda{+}\gamma_1}
   {+}\lambda B_2 \frac{\nu{+} \lambda\beta w}{\gamma_2 {+}\nu{+} \lambda\beta w}{+}o(w).
\]

Leading to the following differential system,
\[
\frac{\diff w}{\diff t}(t) =
    \frac{a^{\text{\rm PNS}}w(t)^2+b^{\text{\rm PNS}}w(t)+c^{\text{\rm PNS}}}{w(t)-w^{\text{\rm PNS}}_{\text{approx}}}+o(w(t)),
\]
where,

\begin{multline*}
    a^{\text{\rm PNS}} = B_1\frac{\lambda^2\beta^2(1{+}\lambda)}{1{+}\lambda{+}\gamma_1},\,
    b^{\text{\rm PNS}} = B_1\frac{\nu\lambda^2\beta}{\lambda{+}\gamma_1}{+}
        B_1\frac{\lambda\beta(1{+}\lambda)(\gamma_2{+}\nu)}{1{+}\lambda{+}\gamma_1}{+}B_2\lambda,\\
    c^{\text{\rm PNS}} = B_1\frac{\nu\lambda(\gamma_2{+}\nu)}{\lambda{+}\gamma_1}{+}B_2\frac{\nu}{\beta} \text{ and }
    w^{\text{\rm PNS}}_{\text{approx}} ={-}\frac{\gamma_2{+}\nu}{\lambda\beta}.
\end{multline*}

\begin{proof}
\begin{multline*}
    f^{\text{\rm PNS}}(w) = \nu B_1\frac{\lambda}{\lambda{+}\gamma_1}{+}\lambda \beta w B_1\frac{1{+}\lambda}{1{+}\lambda{+}\gamma_1}\\
   {+}\lambda B_2{-}\lambda \gamma_2 B_2\int_{\R_+}\exp\left(\rule{0mm}{4mm}{-}\tau\left(\gamma_2
        {+}\nu{+} \lambda\beta w\right) \right) \diff \tau{+}o(w).
\end{multline*}
\end{proof}

Therefore, if
\[
    \Delta^{\text{\rm PNS}} = {b^{\text{\rm PNS}}}^2-4a^{\text{\rm PNS}}c^{\text{\rm PNS}}>0,
\]
we have an analytical expression for the fixed points of the dynamics $w^{\text{\rm PNS}}_{w{\approx}0}$,
\[
    w^{\text{\rm PNS}}_{w{\approx}0} = \frac{-b^{\text{\rm PNS}} +/- \sqrt{\Delta^{\text{\rm PNS}}}}{2a^{\text{\rm PNS}}}.
\]

\subsection{Nearest neighbor symmetric reduced STDP at an excitatory synapse}
\label{app:proofPNREXC}

To study the invariant distribution, we need to use a different formulation of the nearest reduced symmetric rule.

For $w{\ge}0$, we can define $(X^w(t),T^{\textup{PNR}, w}_1, T^{\textup{PNR}, w}_2(t))$, the solution of the SDEs,
\begin{equation}
\label{def:pairstatiPNR}
\begin{cases}
    \diff X^w(t) \displaystyle = {-}X^w(t)\diff t+w\mathcal{N}_{\lambda}(\diff t),\\
    \diff T^{\textup{PNR}, w}_1(t)= \diff t{-}T^{\textup{PNR}, w}_1(t{-})\,\mathcal{N}_{\lambda}(\diff t),\\
    \diff T^{\textup{PNR}, w}_2(t)= \diff t{-}T^{\textup{PNR}, w}_2(t{-})\,\mathcal{N}_{\beta,X^w}(\diff t).
\end{cases}
\end{equation}

and,
\begin{multline*}
     \frac{\diff w}{\diff t}(t) = f^{\text{\rm PNR}}(w) = \E_{\Pi^{\textup{PNR}}_{w(t)}}\left[\lambda Z_{2} {+} \beta(X) Z_{1} \right]\\
        = \E_{\Pi^{\textup{PNR}}_{w(t)}}\left[ \ind{T_1<T_2}B_1 \beta(X)\exp(-\gamma_1 T_1)\right. \\
       {+}\left.\ind{T_2<T_1}B_2\lambda \exp(-\gamma_2 T_2)\right]
\end{multline*}

\begin{figure}[p]\centerfloat
\includegraphics{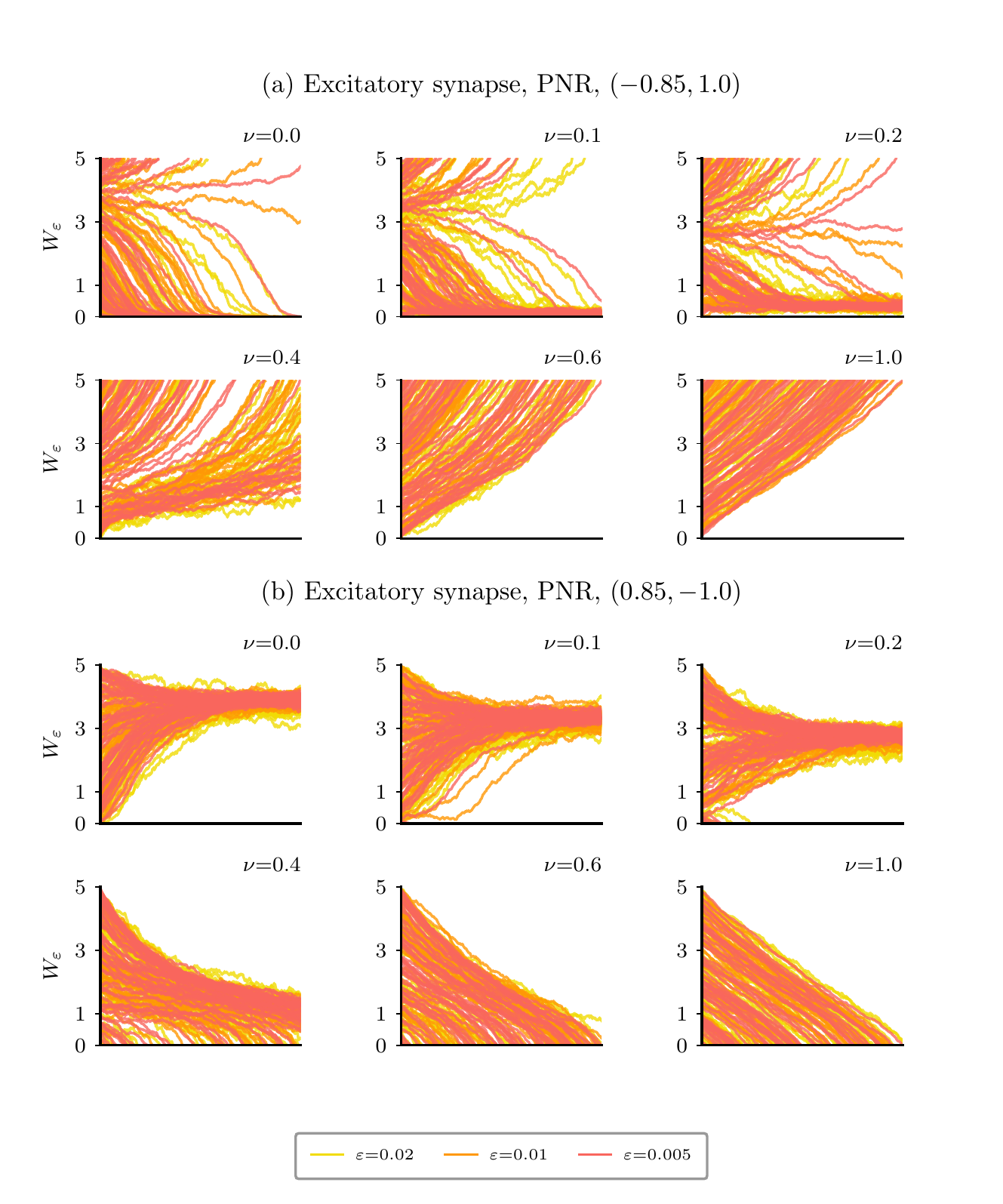}
\caption{Influence of $\nu$ on dynamics with the pair-based nearest-neighbor reduced symmetric scheme}
\label{fig:EXTERNAL}
\end{figure}

\subsection{All-to-all STDP at an inhibitory synapse}
\label{app:proofPAINH}

\begin{definition}
    We define the density of probability $Q(y)$ of the exponential Shot-Noise process $Y$ associated to $\mathcal{N}_{\lambda}$, according to Gilbert and Pollack (1960).
    A general expression of $Q(y)$ can be found in Gilbert and Pollack (1960).
    In our case, we will use,
    \[
    \begin{cases}
        Q(y) &= c(\lambda) y^{\lambda-1}\quad, 0\leq y \leq 1,\\
        \displaystyle\lim_{y\rightarrow+\infty}Q(y) &= 0,
    \end{cases}
    \]
    with,
    \[
        c(\lambda) = \frac{e^{-\gamma_{e} \lambda}}{\Gamma(\lambda)}
    \]
    where, $\gamma_e$ Euler constant and $\Gamma$ Euler function.
\end{definition}

We have the two following limits, for small $w$,
\[
    \frac{\diff w}{\diff t}(t) = f^{\text{\rm PA}}(w) = A^{\text{\rm PA}}_0{-}A^{\text{\rm PA}}_{1}w = -A^{\text{\rm PA}}_1\left(w+w^{\text{\rm PA}}\right).
\]
where,
\[
    A^{\text{\rm PA}}_{0} = \nu\lambda\left(\frac{B_1}{\gamma_1}{+}\frac{B_2}{\gamma_2}\right) \text{ and }
    A^{\text{\rm PA}}_{1} = \beta\lambda^2\left(\frac{B_1}{\gamma_1}{+}\frac{B_2}{\gamma_2}{+}\frac{B_1}{\lambda(1{+}\gamma_1)}\right).
\]

We can compute, when $w{\geq} \nu /\beta$,

\[
\frac{\diff w}{\diff t}(t) = \frac{A^{\text{\rm PAI}}}{w(t)^{\lambda{+}\gamma_1}}\left(1{+}\eta^{\text{\rm PAI}}\left[\frac{w(t)}{w^{\text{\rm PAI}}} \right]^{\gamma_1} \right)
\]
where,
\[
    A^{\text{\rm PAI}} = \left[ \frac{\nu}{\beta} \right]^{\lambda{+}\gamma_1}\frac{c(\lambda) B_1 \nu}{\left( \lambda{+}\gamma_1\right)\left( \lambda{+}\gamma_1{+}1\right)},\,
    w^{\text{\rm PAI}} = \frac{\beta}{\nu} \left(\left| \frac{B_2}{B_1} \right|\frac{\left( \lambda{+}\gamma_1\right)\left( \lambda{+}\gamma_1{+}1\right)}{\gamma_2\left( \lambda{+}1\right)}\right)^{1/\gamma_1},
\]
and,
\[
    \eta^{\text{\rm PAI}} = \left| \frac{B_2}{B_1} \right| \frac{B_1}{B_2}.
\]

\begin{proof}
For $w{\ge}0$, we have to calculate,
\[
    I_1\steq{def}\int (\nu-\beta x)^+z_1\Pi^{\textup{PA}}_w(\diff x,\diff z),
\text{ and }
     I_2\steq{def}\lambda \int  z_2\Pi^{\textup{PA}}_w(\diff x,\diff z).
\]

We have,
\[
    I_2 = \frac{\lambda B_2}{\gamma_2}\int (\nu-\beta x)^+\Pi^{\textup{PA}}_w(\diff x,\diff z)= c(\lambda) \frac{\lambda B_2}{\gamma_2}\int_0^{\frac{\nu}{\beta w}}\left( \nu-\beta w y\right)Q(y)\diff y
\]

We have two cases, if $w{\ll}\nu /\beta$, then,
\[
    I_2 =  \frac{\lambda B_2}{\gamma_2}\left( \nu{-}\beta w \lambda \right).
\]

And, if $w{\geq}\nu /\beta$,
\begin{align*}
    I_2 &= c(\lambda) \frac{\lambda B_2}{\gamma_2}\int_0^{\frac{\nu}{\beta w}}\left( \nu-\beta w y\right)y^{\lambda-1}\diff y \\
    &= c(\lambda) \frac{\lambda B_2}{\gamma_2} \nu \left[ \frac{\nu}{\beta w}\right]^{\lambda} \left( \frac{\nu}{\lambda}{-}\frac{1}{\lambda{+}1} \right)
     = c(\lambda) \frac{B_2}{\gamma_2} \frac{\nu}{\lambda{+}1} \left[ \frac{\nu}{\beta w}\right]^{\lambda}
\end{align*}

Then,
\[
    I_1 =\int \max(0,\nu-\beta x)z_1\Pi^{\textup{PA}}_w(\diff x,\diff z) =c(\lambda) \int_0^{\frac{\nu}{\beta w}} \left(\nu-\beta w y\right)B_1y^{\gamma_1}Q(y)\diff y
\]

We have two cases again, if $w \ll \nu /\beta$, then,
\[
    I_1 = \frac{\lambda B_1}{\gamma_1}\left( \nu{-}\beta w \lambda \right){-} \frac{\lambda B_1}{\gamma_1{+}+1}\beta w
\]

Again, if $w \geq \nu /\beta$,
\begin{align*}
    I_1&=\int \max(0,\nu-\beta x)z_1\Pi^{\textup{PA}}_w(\diff x,\diff z)=c(\lambda) \int_0^{\frac{\nu}{\beta w}} \left(\nu{-}\beta w y\right)B_1y^{\gamma_1}y^{\lambda-1}\diff y\\
    &=\frac{c(\lambda) B_1 \nu}{(\lambda{+}\gamma_1)(\lambda{+}\gamma_1{+}1)} \left[\frac{\nu}{\beta w}\right]^{\lambda{+}\gamma_1}
\end{align*}

\end{proof}

\begin{table}[p]
\renewcommand*{\arraystretch}{2}
\centering
\begin{tabular}{ cccccc }
  & $\nu$ & Sym. LTD & Sym. LTP & Hebbian & Anti-Hebbian \\
 \hline
 \multirow{5}{*}{PA} & \multirow{2}{*}{0} & \multirow{2}{*}{LTD} & \multirow{2}{*}{LTP} & LTD if $A^{\text{\rm PA}}_{0} {<} 0$ & LTD if $A^{\text{\rm PA}}_{0} {<} 0$\\
  \cline{5-6}
  & & & & LTP if $A^{\text{\rm PA}}_{0} {>} 0$ & LTP if $A^{\text{\rm PA}}_{0} {>} 0$\\
  \cline{3-6}
  & \multirow{3}{*}{${>}0$} & \multirow{3}{*}{LTD} & \multirow{3}{*}{LTP} & LTD if $A^{\text{\rm PA}}_{0} {<} 0$ & LTD if $A^{\text{\rm PA}}_{1} {<} 0$\\
  \cline{5-6}
  & & & & LTP if $A^{\text{\rm PA}}_{1} {>} 0$ & LTP if $A^{\text{\rm PA}}_{0} {>}0$\\
  \cline{5-6}
  & & & & UNSTABLE FP if not & STABLE FP if not\\
  \hline
  \multirow{2}{*}{PNS} & \multirow{2}{*}{$\geq 0$} & \multirow{2}{*}{LTD} & \multirow{2}{*}{LTP} & LTP if $A^{\text{\rm PNS}}_{0/3} {>} 0$ & LTD if $A^{\text{\rm PNS}}_{0/3} {<} 0$\\
  \cline{5-6}
  & & & & UNSTABLE FP if not & STABLE FP if not\\
  \hline
  \multirow{6}{*}{PNR*} & \multirow{3}{*}{${=}0$} & \multirow{3}{*}{LTD} & \multirow{3}{*}{LTP} & LTD & LTD \\
  \cline{5-6}
  & & & & LTP & LTP \\
  \cline{5-6}
  & & & & STABLE FP & UNSTABLE FP\\
  \cline{3-6}
  & \multirow{3}{*}{${>}0$} & \multirow{3}{*}{LTD} & \multirow{3}{*}{LTP} & LTD & LTD \\
  \cline{5-6}
  & & & & LTP & LTP \\
  \cline{5-6}
   & & & & MULTIPLE FP & MULTIPLE FP \\
\end{tabular}
\caption{Different pairing schemes lead to diverse dynamics for an excitatory synapse (* with simulations)}
\label{table:compPIexc}
\end{table}

\begin{table}[p]
\renewcommand*{\arraystretch}{2}
\centering
\begin{tabular}{ ccccc }
  & Sym. LTD & Sym. LTP & Hebbian & Anti-Hebbian \\
  \hline
  \multirow{2}{*}{PA} & \multirow{2}{*}{LTD} & \multirow{2}{*}{LTP} & LTD if $A^{\text{\rm PA}}_{0} {<} 0$ & LTP if $A^{\text{\rm PA}}_{0} {>} 0$\\
  \cline{4-5}
  & & & STABLE FP if not & UNSTABLE FP if not\\
  \hline
  \multirow{2}{*}{PNS/PNR*}  & \multirow{2}{*}{LTD} & \multirow{2}{*}{LTP} & LTD  & LTP\\
  \cline{4-5}
  & & & STABLE FP & UNSTABLE FP\\
\end{tabular}
\caption{Different pairing schemes lead to diverse dynamics for an inhibitory synapse}
\label{table:compPIinh}
\end{table}

\begin{figure}[p]
\centerfloat
\includegraphics{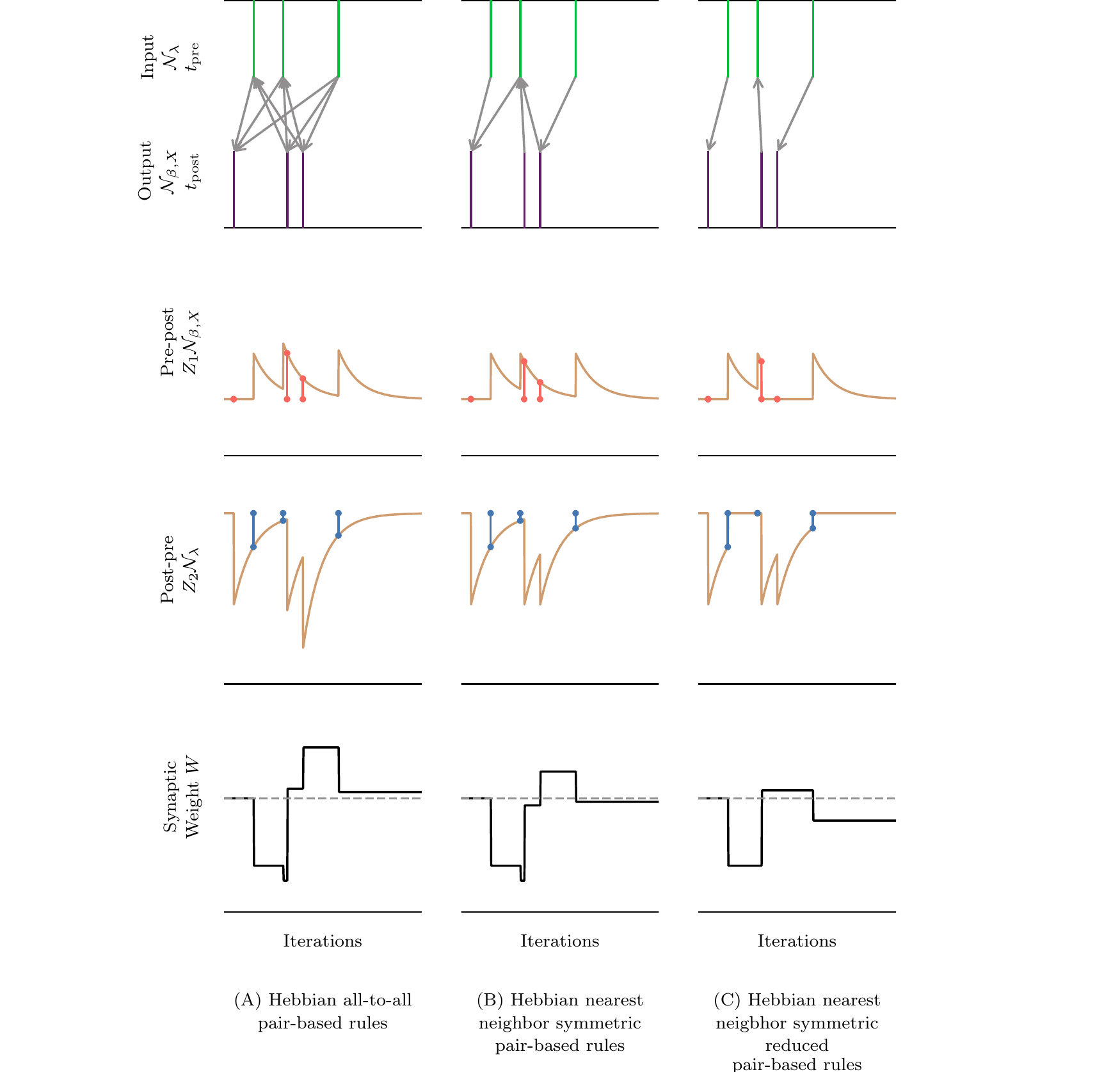}
\caption{Markovian formulation of pair-based models}
\label{fig:markovpairbased}
\end{figure}

\begin{figure}[p]
\centerfloat
\includegraphics{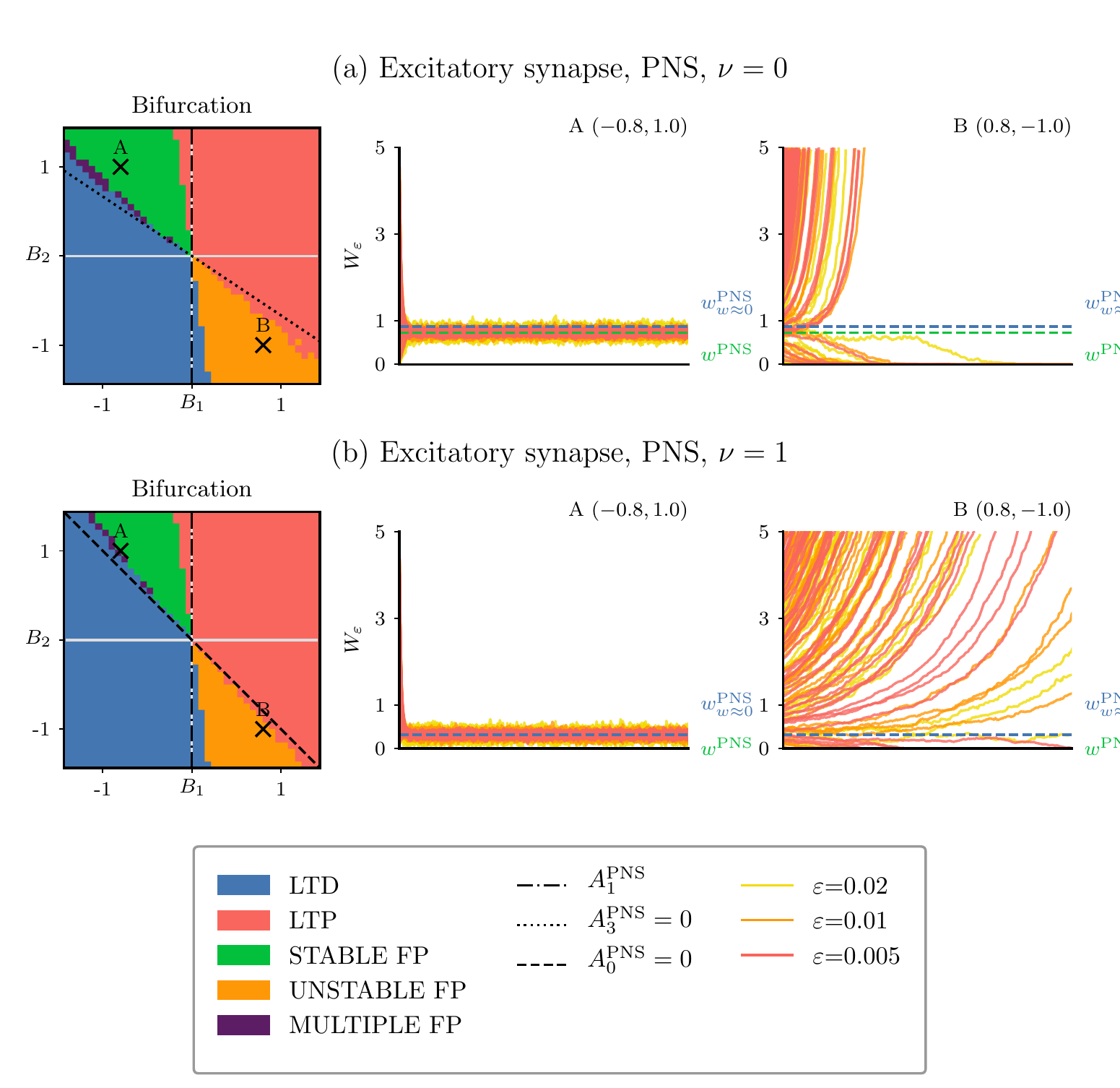}
\caption{Nearest neighbor symetric pair-based STDP for an excitatory synapse}
\label{fig:FigurePIBIF_PNS}
\end{figure}

\end{document}